\newcommand{\cC}{{\mathcal{C}}}
\newcommand{\cE}{{\mathcal{E}}}
\newcommand{\cN}{{\mathcal{N}}}
\newcommand{\cP}{{\mathcal{P}}}
\newcommand{\cS}{{\mathcal{S}}}
\newcommand{\cT}{{\mathcal{T}}}
\newcommand{\cX}{{\mathcal{X}}}
\newcommand{\al}{\alpha}
\newcommand{\eps}{\varepsilon}
\newcounter{actr}
{\begin{list}{(\alph{actr})}{\usecounter{actr}}}{\end{list}}
\newcounter{ictr}
{\begin{list}{(\roman{ictr})}{\usecounter{ictr}}}{\end{list}}
\newtheorem{thm}{Theorem}
\newtheorem{lemma}{Lemma}
\newtheorem{corol}{Corollary}
\newtheorem{prop}{Proposition}
\newenvironment{new-proof}[1]
{{\em Proof  #1: }}%
{ \noindent\qed }
\newcommand{\qed}{\rule[0.1ex]{1.4ex}{1.6ex}}
\newcommand{\mrm}{\mathrm}
\DeclareMathAlphabet{\mathbsf}{OT1}{cmss}{bx}{n}
\DeclareMathAlphabet{\mathssf}{OT1}{cmss}{m}{sl}
\DeclareSymbolFont{bsfletters}{OT1}{cmss}{bx}{n}  
\DeclareSymbolFont{ssfletters}{OT1}{cmss}{m}{n}
\DeclareMathSymbol{\bsfGamma}{0}{bsfletters}{'000}
\DeclareMathSymbol{\ssfGamma}{0}{ssfletters}{'000}
\DeclareMathSymbol{\bsfDelta}{0}{bsfletters}{'001}
\DeclareMathSymbol{\ssfDelta}{0}{ssfletters}{'001}
\DeclareMathSymbol{\bsfTheta}{0}{bsfletters}{'002}
\DeclareMathSymbol{\ssfTheta}{0}{ssfletters}{'002}
\DeclareMathSymbol{\bsfLambda}{0}{bsfletters}{'003}
\DeclareMathSymbol{\ssfLambda}{0}{ssfletters}{'003}
\DeclareMathSymbol{\bsfXi}{0}{bsfletters}{'004}
\DeclareMathSymbol{\ssfXi}{0}{ssfletters}{'004}
\DeclareMathSymbol{\bsfPi}{0}{bsfletters}{'005}
\DeclareMathSymbol{\ssfPi}{0}{ssfletters}{'005}
\DeclareMathSymbol{\bsfSigma}{0}{bsfletters}{'006}
\DeclareMathSymbol{\ssfSigma}{0}{ssfletters}{'006}
\DeclareMathSymbol{\bsfUpsilon}{0}{bsfletters}{'007}
\DeclareMathSymbol{\ssfUpsilon}{0}{ssfletters}{'007}
\DeclareMathSymbol{\bsfPhi}{0}{bsfletters}{'010}
\DeclareMathSymbol{\ssfPhi}{0}{ssfletters}{'010}
\DeclareMathSymbol{\bsfPsi}{0}{bsfletters}{'011}
\DeclareMathSymbol{\ssfPsi}{0}{ssfletters}{'011}
\DeclareMathSymbol{\bsfOmega}{0}{bsfletters}{'012}
\DeclareMathSymbol{\ssfOmega}{0}{ssfletters}{'012}
\newcommand{\rvk}{{\mathssf{k}}}	
\newcommand{\rvl}{{\mathssf{l}}}	
\newcommand{\rvm}{{\mathssf{m}}}	
\newcommand{\rvs}{{\mathssf{s}}}	
\newcommand{\rvt}{{\mathssf{t}}}	
\newcommand{\rvu}{{\mathssf{u}}}	
\newcommand{\rvv}{{\mathssf{v}}}	
\newcommand{\rvx}{{\mathssf{x}}}	
\newcommand{\rvy}{{\mathssf{y}}}	
\newcommand{\rvyr}{{\mathssf{y}}_\mrm{r}}	
\newcommand{\rvye}{{\mathssf{y}}_\mrm{e}}	
\newcommand{\rvz}{{\mathssf{z}}}	%
\newcommand{\rvzr}{{\mathssf{z}}_\mathrm{r}}	
\newcommand{\rvze}{{\mathssf{z}}_\mathrm{e}}	
\renewcommand{\rvk}{\mathsf{\kappa}}
\newcommand{\rvsr}{\mathsf{s}_{r}}
\newcommand{\rvse}{\mathsf{s}_{e}}
\newcommand{\rvst}{\mathsf{s}_{t}}
\newcommand{\orvyr}{\bar{\rvyr}}
\newcommand{\orvye}{\bar{\rvye}}
\begin{document}
%
\title{Secret-key Agreement with Channel State Information at the Transmitter}
%
%
%

\author{Ashish~Khisti,~\IEEEmembership{Member,~IEEE,}
        Suhas Diggavi,~\IEEEmembership{Member,~IEEE,}
        and~Gregory Wornell,~\IEEEmembership{Fellow,~IEEE}
\thanks{Ashish Khisti is with the Department of Electrical and Computer Engineering, University of Toronto, Toronto, ON, Canada
 e-mail: akhisti@comm.utoronto.ca. Suhas Diggavi is with the Ecole Polytechnique Federale de Lausanne EPFL) and with the University of California, Los Angles (UCLA), USA, email: suhas.diggavi@epfl.ch, while Gregory Wornell is with the Massachusetts Institute of Technology, Cambridge, MIT, USA  email: gww@mit.edu }
\thanks{Parts of this  work were presented at the European Wireless Conference 2010, Lucca, Italy~\cite{khisti:10} and  the IEEE International Symposium on Information Theory (ISIT), Seoul Korea~\cite{khisti:09}.  }
\thanks{The work of Ashish Khisti was supported by a Natural Science and Engineering Research  Council (NSERC) Discovery Grant. This work was also supported by NSF under Grant No. CCF-0515109.}
}

\maketitle

\begin{abstract}
We study the capacity of secret-key agreement over a wiretap channel with state parameters. The transmitter communicates to the legitimate receiver and the eavesdropper over a discrete memoryless wiretap channel with a memoryless state sequence. The transmitter and the  legitimate receiver generate a shared secret key, that remains secret from the eavesdropper. No public discussion channel is available. The state sequence is known noncausally to the transmitter. We derive lower and upper bounds on the secret-key capacity. The lower bound involves constructing a common state reconstruction sequence at the legitimate terminals and binning the set of reconstruction sequences to obtain the secret-key. For the special case of Gaussian channels with additive interference (\emph{secret-keys from dirty paper channel}) our bounds differ by 0.5 bit/symbol and coincide in the high signal-to-noise-ratio and high interference-to-noise-ratio regimes. For the case when the legitimate receiver is also revealed the state sequence, we establish that our lower bound achieves the the secret-key capacity.   In addition, for this special case, we also propose another scheme that attains the capacity and requires only causal side information at the transmitter and the receiver.
\end{abstract}

%
\IEEEpeerreviewmaketitle

\section{Introduction}
Secret keys are a fundamental requirement for any application involving secure communication or computation. An information theoretic approach to secret key generation between two or more terminals was pioneered in~\cite{maurer:93,ahlswedeCsiszar:93} and subsequently extended in~\cite{CsiszarNarayan:04,CsiszarNarayan:10,Gohari:1,Gohari:2}. In the setup considered in these works, the transmitter communicates to a legitimate receiver and the eavesdropper over a memoryless broadcast channel and is interested in generating a secret key shared with the legitimate receiver. The legitimate terminals can also exchange an unlimited number of messages over a public channel.  There has been a significant interest in developing practical approaches for generating shared secret keys between two or more terminals based on such techniques, see e.g.,~\cite{wilsonTseScholtz:07,pap1, pap2, pap3,pap4,pap5, pap6, mathur} and references therein.  

In the present work, we study the secret key agreement capacity over a broadcast channel controlled by a random state variable.  The importance of studying channels with state parameters~\cite{shannon:58,gelfandPinsker:80,wolfowitz:coding} has become increasingly evident in recent times due a variety of   applications including fading channels~\cite{caireShamai99}, broadcast channels~\cite{caireShamai:01} and digital watermaking~\cite{chenWornell:01}. For example in fading channels, the state variable could model the instantaneous  fading coefficient of the channel.  In broadcast channels the state sequence models an  interfering message to another receiver while in watermarking systems the state sequence represents a host sequence on which information message needs to be embedded. Clearly depending on the application the state sequence may be known to either the sender or the receivers or both. In this paper, unless otherwise stated, we assume that the entire state sequence is known to the sender noncausally. As will be discussed,  the seemingly more general case when each receiver also has (a possibly noisy) side information can be easily incorporated in this model.  Some of our results only require causal transmitter side information although we note in advance that we do not consider this problem in detail.

In the present paper we only focus on the case when there is no discussion channel available. We point the reader to our conference papers~\cite{khisti:10,khisti:09} for some results on the case when a public discussion channel is available.  Notice that our setup differs from~\cite{mitrapant:06,chenVinck:06,LiuChen:07} that study the wiretap channel with state parameters and require that the transmitter send a confidential message to the receiver and build on the \emph{wiretap} channel model~\cite{csiszarKorner:78}. Our results indicate that the achievable secret-key rate can be significantly higher compared to the  results in~\cite{mitrapant:06,chenVinck:06,LiuChen:07}. Recently an improved lower bound for the wiretap channel with causal state information at the transmitter and receiver has been reported in~\cite{khiangElGamal:10}. Interestingly it uses a block markov coding scheme, where a secret key is generated in each block as an intermediate step. 

After the conference papers~\cite{khisti:09,khisti:10} on which this paper is based appeared, the authors became aware about a recent work~\cite{prabhakaran:09} where a similar secret-key agreement scheme over channels with noncausal channel state information is presented. This scheme is used in constructing a coding scheme that provides a tradeoff between secret-key and secret-message transmission. The paper~\cite{prabhakaran:09} however does not fully explore the problem of secret key agreement over wiretap channels with state parameters.  In particular to the best of our knowledge, it does not have the results in the present paper such as an upper bound on the secret-key capacity, the asymptotic optimality of the lower bound for the Gaussian case or the secret-key capacity for the case of symmetric CSI. 

\section{Problem Statement}

\subsection{Channel Model}
\label{subsec:Model}
The channel model has three terminals --- a sender, a receiver and an eavesdropper.
The sender communicates with the other two terminals over a
discrete-memoryless-channel  controlled by a random state parameter. The transition probability of the channel is
$p_{\rvyr, \rvye|\rvx,\rvs}(\cdot)$ where $\rvx$ denotes the channel input
symbol, whereas $\rvyr$ and $\rvye$ denote the channel output
symbols at the receiver and the eavesdropper respectively. The
symbol $\rvs$ denotes a state variable that controls the channel
transition probability. We assume that it is  independent and identically distributed (i.i.d.) from a
distribution $p_{\rvs}(\cdot)$ in each channel use. Further, the entire
sequence $\rvs^n$ is known to the sender before the communication
begins. 

As explained in section~\ref{subsec:ExtendedModel} the model generalizes easily to take into account correlated side information sequence at each of the receivers. 

\begin{figure}
\centering \psfrag{sr}{$\rvs$}\psfrag{k}{$\rvk$}
\includegraphics[scale=0.35]{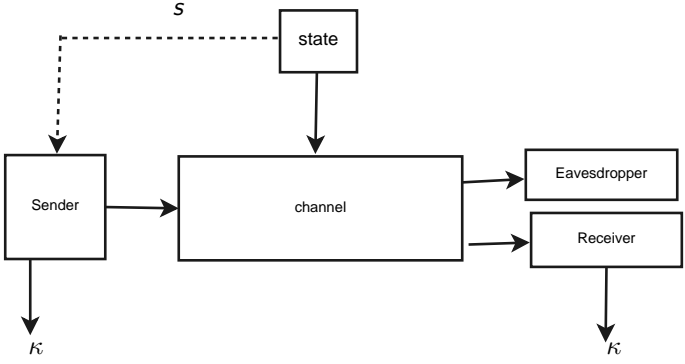}
\caption{Wiretap channel controlled by a state parameter. The
channel transition probability $p_{y_r,y_e|x,s}$ is controlled by
a state parameter $\rvs$. The entire source sequence $\rvs^n$ is
known to the sender but not to the receiver or the eavesdropper. The
sender and receiver generate a secret key $\rvk$ at the end of the
transmission.}\label{fig:wiretap}
\end{figure} 

\subsection{Secret-Key Capacity}
\label{subsec:SecKeyCap}
A length $n$ encoder is defined as follows. The sender samples a
random variables $\rvm_\rvx$ from the conditional distribution
$p_{\rvm_\rvx|\rvs^n}(\cdot|s^n)$. The encoding function produces a
channel input sequence 
\begin{equation}\rvx^n = f_n(\rvm_\rvx, \rvs^n) \label{eq:def:Enc}\end{equation} and transmits it
over $n$ uses of the channel. At time $i$ the symbol $\rvx_i$ is
transmitted and the legitimate receiver and the eavesdropper observe
output symbols $\rvy_{ri}$ and $\rvy_{ei}$ respectively, sampled
from the conditional distribution
$p_{\rvyr,\rvye|\rvx,\rvs}(\cdot)$. The sender and receiver compute
secret keys \begin{equation} \rvk = g_n(\rvm_\rvx,\rvs^n), \qquad \rvl = h_n(\rvyr^n).\label{eq:def:Dec}\end{equation} A
rate $R$ is achievable if there exists a sequence of encoding
functions such that for some sequence $\eps_n$ that vanishes as
$n\rightarrow\infty$, we have that $\Pr(\rvk\neq\rvl) \le \eps_n$
and \begin{equation}\label{eq:rate}\frac{1}{n}H(\rvk) \ge R - \eps_n,\end{equation} and \begin{equation}
\label{eq:secNoInterac} \frac{1}{n}I(\rvk;\rvye^n) \le \eps_n.
\end{equation} The largest achievable rate is the secret-key capacity.

\subsection{Extended Model}
\label{subsec:ExtendedModel}
In our proposed model we are assuming the state variable is only known to the transmitter and not to the receiving terminals. A more general model involves a state variable that can be decomposed into $\rvs = (\rvst, \rvsr, \rvse, \rvs_0)$ where the sequence $\rvst^n$ is revealed noncausally to the sender whereas $\rvsr^n$ and $\rvse^n$ are revealed to the legitimate receiver and the eavesdropper respectively while $\rvs_0^n$  is not revealed to any of the terminals. It turns out that the model in section~\ref{subsec:Model} includes this extended model. The secret-key capacity for this new model is identical to the secret-key capacity of a particular model in section~\ref{subsec:Model} defined by: $\orvyr = (\rvyr, \rvsr)$ and $\orvye = (\rvye, \rvse)$ and the channel transition probability
\begin{equation}
p(\bar{y}_r, \bar{y}_e|s_t,x) = \sum_{s_0} p(y_r, y_e|s_0,s_r,s_e,s_t,x)p(s_0,s_r,s_e|s_t).\label{eq:equiv}
\end{equation}
The equivalence can be established by noting that the modified channel preserves the same knowledge of the side information sequences as the original problem, the rate and equivocation terms only depend on the joint distribution $p(\bar{y}_r^n, \bar{y}_e^n,x^n,s_t^n)$ and for any input distribution $p(x^n|s_t^n)$, the extended channel satisfies
\begin{equation}
p(\bar{y}_r^n, \bar{y}_e^n| x^n,s_t^n) = \prod_{i=1}^n p(\bar{y}_{ri}, \bar{y}_{ei}|x_i, s_{ti}), 
\label{eq:jointPDF}\end{equation}
where each term on the right hand side of~\eqref{eq:jointPDF} obeys~\eqref{eq:equiv}.

We omit a detailed proof in interest of space and point to the reader to~\cite[pp.~17---25]{sideInfo}\cite[Chapter 7, pp.~7-54]{ElGamalKim} for an analogous observation. Note that our model inherently uses the asymmetry in channel state knowledge between the eavesdropper and the legitimate receiver for secret key generation. While as discussed in this subsection, it can be easily extended to incorporate receiver side information, for simplicity in exposition we will suppress the availability of side information at the receivers. 

\section{Main Results}
\label{sec:MainResults}
We summarize the main results of this paper in this section.

\subsection{Capacity Bounds}
\label{subsec:CapacityBounds}
We first provide an  achievable rate (lower bound) on the secret-key
capacity.
\begin{thm}
An achievable secret-key rate is
\begin{equation}
\label{eq:lowerBoundNoDisc} R^- = \max_{p_\rvu, p_{\rvx|\rvs,\rvu}}
I(\rvu;\rvyr) - I(\rvu;\rvye),
\end{equation}
where the maximization is over all auxiliary random variables $\rvu$
that satisfy the Markov condition $\rvu \rightarrow (\rvx, \rvs)
\rightarrow (\rvyr,\rvye)$ and furthermore satisfy the constraint
that
\begin{equation}
I(\rvu;\rvyr)-I(\rvu;\rvs) \ge 0. \label{eq:lowerBoundNoDiscCons}
\end{equation}\label{thm:lbNoDisc}
\end{thm}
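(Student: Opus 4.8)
The plan is to use a Gelfand--Pinsker style random code together with random binning to distill the key, realizing the ``reconstruct-then-bin'' idea announced in the abstract. Fix the maximizing $p_\rvu$ and $p_{\rvx|\rvs,\rvu}$ and set $R_u = I(\rvu;\rvyr)-\eps$. I would generate $2^{nR_u}$ codewords $\rvu^n(w)$, $w\in\{1,\dots,2^{nR_u}\}$, drawn i.i.d. from the marginal $p_\rvu$, and independently assign each index $w$ a uniformly random bin label in $\{1,\dots,2^{nR_k}\}$ with $R_k = I(\rvu;\rvyr)-I(\rvu;\rvye)-2\eps$; the secret key $\rvk$ is declared to be the bin label of the transmitted index. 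The codebook and the binning map are revealed to all three terminals. The whole point of the constraint \eqref{eq:lowerBoundNoDiscCons} is to make this construction feasible: it guarantees the covering rate and the decoding rate are simultaneously met.

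Given $\rvs^n$, the transmitter searches for an index $w$ with $(\rvu^n(w),\rvs^n)$ jointly typical and, to keep the resulting choice nearly uniform, selects one at random among all such indices. By the covering lemma this search succeeds with high probability precisely because $R_u > I(\rvu;\rvs)$, which is exactly where \eqref{eq:lowerBoundNoDiscCons} enters: for $\eps$ small it yields $R_u = I(\rvu;\rvyr)-\eps \ge I(\rvu;\rvs)$, and the boundary case of equality is recovered by a continuity argument over admissible $\rvu$. The input is then produced symbol-by-symbol from $p_{\rvx|\rvs,\rvu}$, and by the Markov condition $\rvu\rightarrow(\rvx,\rvs)\rightarrow(\rvyr,\rvye)$ the induced channel from $\rvu$ to $\rvyr$ is the intended one. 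The receiver decodes $\hat w$ by joint typicality with $\rvyr^n$; since $R_u = I(\rvu;\rvyr)-\eps < I(\rvu;\rvyr)$, the packing lemma gives $\Pr(\hat w \neq w)\to 0$, so the reconstructed key (the bin label of $\hat w$) equals $\rvk$ with probability approaching one, establishing $\Pr(\rvk\neq\rvl)\le\eps_n$. Note that, unlike the confidential-message setting, no penalty of $I(\rvu;\rvs)$ survives in the final rate, because the covering freedom is used only to agree on $w$, not to carry a prescribed payload.

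For the rate, the symmetry of the random codebook together with the randomized covering and the independent random binning makes the selected index approximately uniform over the bins, so $\tfrac1n H(\rvk)\ge R_k-\eps_n$, giving $R^-$ as $\eps\to 0$. The crux of the argument --- and the step I expect to be the main obstacle --- is the equivocation bound $\tfrac1n I(\rvk;\rvye^n)\le\eps_n$. I would prove this by a list-typicality count: averaged over the random code, the number of codewords jointly typical with the eavesdropper's observation $\rvye^n$ is about $2^{n(R_u-I(\rvu;\rvye))}=2^{n(R_k+\eps)}$, exceeding the number $2^{nR_k}$ of bins by a factor $2^{n\eps}$. Hence with high probability every bin contains many codewords consistent with $\rvye^n$, so conditioned on $\rvye^n$ the bin label remains essentially uniform; quantifying this by lower bounding $H(\rvk\mid\rvye^n)$ via a conditional-typicality / Fano-type argument, and then selecting a good deterministic code from the ensemble, yields vanishing leakage. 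The two delicate points are that the chosen $w$ is correlated with $\rvs^n$ through the covering step and that $w$ is not exactly uniform; I would dispatch these by observing that the eavesdropper sees only $\rvye^n$ --- the state does not appear in its observation in this model --- so the state correlation cannot help it, and by carrying the $\eps$-slack throughout the counting so that the list provably meets every bin.
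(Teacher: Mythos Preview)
Your construction is identical to the paper's: Gelfand--Pinsker covering to agree on a common $\rvu^n$ (this is where \eqref{eq:lowerBoundNoDiscCons} enters), then random binning of the codeword index as the key, with reliability handled by the covering and packing lemmas exactly as you describe. The secrecy analysis, however, is carried out differently. The paper does not count lists; it writes $\tfrac{1}{n}H(\rvk|\rvye^n)=\tfrac{1}{n}H(\rvu^n|\rvye^n)-\tfrac{1}{n}H(\rvu^n|\rvye^n,\rvk)$, bounds the second term by Fano (each bin has only $2^{n(I(\rvu;\rvye)-\eps_n)}$ sequences, so the eavesdropper can decode $\rvu^n$ from $\rvye^n$ and the bin label), and lower-bounds the first by the chain-rule expansion $H(\rvu^n)+H(\rvye^n|\rvu^n,\rvs^n)-H(\rvye^n)+I(\rvs^n;\rvye^n|\rvu^n)$, handling each piece separately.

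Your list-typicality route can be made to work, but your dispatch of the state-correlation issue is not right: the state \emph{does} appear in the eavesdropper's observation, since the channel is $p_{\rvye|\rvx,\rvs}$. What makes secrecy go through is not that $\rvs^n$ is absent from $\rvye^n$ but that its residual effect given $\rvu^n$ is correctly accounted for---this is exactly the $I(\rvs^n;\rvye^n|\rvu^n)$ term the paper isolates and lower-bounds (via the joint typicality of $(\rvu^n,\rvs^n)$) by $H(\rvs|\rvu)-\tfrac{1}{n}\sum_i H(\rvs_i|\rvu_i,\rvy_{e,i})$. If you stay with the list argument, showing that the covering-induced posterior over codewords given $\rvye^n$ is near-uniform on the typical list is a genuine step that needs more than ``carrying the $\eps$-slack''; the paper's entropy calculation is precisely a rigorous substitute for that step.
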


The intuition behind the coding scheme is as follows. Upon observing
$\rvs^n$, the sender communicates the best possible reproduction
$\rvu^n$ of the state sequence to the receiver Now both the sender
and the receiver observe a common sequence $\rvu^n$. The set of all
codewords $\rvu^n$ is binned into $2^{nR^-}$ bins and the bin-index
is declared to be the secret key. Note that the problem of communicating 
a state sequence with common knowledge to the receiver is studied in~\cite{steinberg:08,steinberg:09}.
This setup requires that the reconstruction sequence satisfy a certain distortion measure with respect to the
state sequence. In contrast the  common reconstruction sequence in this problem is an intermediate step
used to generate a common secret key.

While we do not have a matching upper bound to Theorem~\ref{thm:lbNoDisc} the following result provides an upper bound to the secret-key capacity that is amenable to numerical evaluation. 

\begin{thm}
The secret-key capacity  is upper
bounded by $C \le R^+$, where
\begin{equation}
\label{eq:upperBoundNoDisc} R^+ = \min_{p_{\rvyr,\rvye|\rvx,\rvs} \in
\cP}\max_{p_{\rvx|\rvs}} I(\rvx,\rvs;\rvyr|\rvye),
\end{equation}
where $\cP$ denotes all the joint distributions
$p^\star_{\rvyr,\rvye|\rvx,\rvs}$ that have the same marginal
distribution as the original channel. \label{thm:ubNoDisc}
\end{thm}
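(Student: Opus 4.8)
The plan is to reduce the statement to a single-letter converse for an arbitrary but fixed channel in $\cP$, and then to carry out that converse. The observation that drives the minimization over $\cP$ is that the secret-key problem is insensitive to the \emph{joint} law of $(\rvyr,\rvye)$ given $(\rvx,\rvs)$: the reliability requirement $\Pr(\rvk\neq\rvl)\le\eps_n$ involves only the receiver marginal $p_{\rvyr|\rvx,\rvs}$ (since $\rvl=h_n(\rvyr^n)$ and $\rvk=g_n(\rvm_\rvx,\rvs^n)$), while the secrecy requirement $\frac{1}{n}I(\rvk;\rvye^n)\le\eps_n$ involves only the eavesdropper marginal $p_{\rvye|\rvx,\rvs}$. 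Hence any code that achieves a rate $R$ over the true channel achieves the same rate, with the same error and leakage, over every $p^\star\in\cP$, because all such channels share these two marginals. Consequently it suffices to establish, for each fixed $p^\star\in\cP$, the single-letter converse that any rate achievable over $p^\star$ obeys $R\le\max_{p_{\rvx|\rvs}}I^\star(\rvx,\rvs;\rvyr|\rvye)$, where $I^\star$ is evaluated under $p^\star$; since every rate achievable over the true channel is also achievable over $p^\star$, this bound then holds for all $p^\star$, and minimizing over $p^\star\in\cP$ yields $C\le R^+$.

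For the converse under a fixed $p^\star$ I would start from the three defining inequalities. Combining the rate constraint with Fano's inequality applied to $(\rvk,\rvl)$ (so that $H(\rvk\mid\rvl)\le n\delta_n$ with $\delta_n\to0$) and the data-processing step $I(\rvk;\rvl)\le I(\rvk;\rvyr^n)$ gives $H(\rvk)\le I(\rvk;\rvyr^n)+n\delta_n$; inserting the secrecy bound $I(\rvk;\rvye^n)\le n\eps_n$ then produces
\begin{equation}
nR \le I(\rvk;\rvyr^n)-I(\rvk;\rvye^n)+n\eps_n',
\end{equation}
with $\eps_n'\to0$. The next step passes from the key to the channel input-state pair: using $I(\rvk;\rvyr^n)-I(\rvk;\rvye^n)\le I(\rvk;\rvyr^n\mid\rvye^n)$ (equivalently $I(\rvk;\rvye^n\mid\rvyr^n)\ge0$), and then the identity $I(\rvx^n,\rvs^n;\rvyr^n\mid\rvye^n)=I(\rvk;\rvyr^n\mid\rvye^n)+I(\rvx^n,\rvs^n;\rvyr^n\mid\rvye^n,\rvk)$, which holds because $\rvk$ is a function of $(\rvm_\rvx,\rvs^n)$ and the channel makes $\rvyr^n$ conditionally independent of $\rvk$ given $(\rvx^n,\rvs^n,\rvye^n)$, I obtain
\begin{equation}
nR \le I(\rvx^n,\rvs^n;\rvyr^n\mid\rvye^n)+n\eps_n'.
\end{equation}

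It remains to single-letterize $I(\rvx^n,\rvs^n;\rvyr^n\mid\rvye^n)$, and this is the step requiring the most care. Writing it as $H(\rvyr^n\mid\rvye^n)-H(\rvyr^n\mid\rvye^n,\rvx^n,\rvs^n)$ and invoking the memoryless property $p^\star(\rvyr^n,\rvye^n\mid\rvx^n,\rvs^n)=\prod_i p^\star(\rvy_{ri},\rvy_{ei}\mid\rvx_i,\rvs_i)$, the conditional term factorizes exactly as $\sum_i H(\rvy_{ri}\mid\rvy_{ei},\rvx_i,\rvs_i)$ (given $(\rvx^n,\rvs^n)$ the output pairs are independent across time, so the past outputs and the other eavesdropper symbols drop out), while the chain rule together with the fact that conditioning reduces entropy bounds the first term by $\sum_i H(\rvy_{ri}\mid\rvy_{ei})$. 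Subtracting gives $I(\rvx^n,\rvs^n;\rvyr^n\mid\rvye^n)\le\sum_i I(\rvx_i,\rvs_i;\rvy_{ri}\mid\rvy_{ei})$. Since $\rvs_i\sim p_\rvs$ carries the correct state marginal and $p_{\rvx_i\mid\rvs_i}$ is a valid test channel, each term is at most $\max_{p_{\rvx|\rvs}}I^\star(\rvx,\rvs;\rvyr|\rvye)$, whence $R\le\max_{p_{\rvx|\rvs}}I^\star(\rvx,\rvs;\rvyr|\rvye)+\eps_n'$; letting $n\to\infty$ and minimizing over $p^\star\in\cP$ finishes the argument. The main obstacle is precisely this tensorization: one must verify that conditioning on the \emph{entire} eavesdropper sequence $\rvye^n$ does not couple the time indices, which is exactly where memorylessness of the fixed joint channel $p^\star$ is used, and which is why the freedom to choose the worst-case joint distribution in $\cP$ must be secured at the outset.
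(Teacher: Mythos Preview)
Your proof is correct and follows essentially the same route as the paper: Fano plus secrecy to get $nR\le I(\rvk;\rvyr^n\mid\rvye^n)+n\eps_n'$, the Markov chain $\rvk\to(\rvx^n,\rvs^n)\to(\rvyr^n,\rvye^n)$ to pass to $I(\rvx^n,\rvs^n;\rvyr^n\mid\rvye^n)$, memorylessness to single-letterize, and the marginal-invariance observation to justify minimizing over $\cP$. The only cosmetic differences are that the paper places the marginal argument at the end rather than the beginning, and closes with concavity of $I(\rvx,\rvs;\rvyr\mid\rvye)$ in $p_{\rvx|\rvs}$ rather than bounding each time index separately.
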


The intuition behind the upper bound is as follows. We create a
degraded channel by revealing the output of the eavesdropper to the
legitimate receiver. We further assume a channel with two inputs
$(\rvx^n,\rvs^n)$ i.e., the state sequence $\rvs^n$ is not
arbitrary, but rather a part of the input codeword with distribution
$p_{\rvs}$. The secrecy capacity of the resulting wiretap channel
is then given by $I(\rvx,\rvs;\rvyr|\rvye)$.

Note that the problem of secret-key agreement is
different from the secret-message transmission problem considered
in~\cite{mitrapant:06,chenVinck:06,LiuChen:07}. This is because the
secret-key can be an arbitrary function of the state sequence (known
only to the transmitter) whereas the secret-message needs to be
independent function of the state sequence. For comparison, the best
known lower bound on the secret-message transmission problem is
stated below.
\begin{prop}\cite{mitrapant:06,chenVinck:06,LiuChen:07}
An achievable secret message rate for the wiretap channel with
noncausal transmiter channel state information (CSI) is
\begin{equation}
R =\max_{p_{\rvu},p_{\rvx|\rvu,\rvs}}I(\rvu;\rvyr) -
\max\left(I(\rvu;\rvs), I(\rvu;\rvye)\right).\label{eq:secMsgRate}
\end{equation}
\end{prop}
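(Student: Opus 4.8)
The plan is to prove achievability by a single random coding argument that superimposes Gelfand--Pinsker binning (to exploit the noncausally known state) on top of Wyner wiretap binning (to hide the message from the eavesdropper), both realized through one auxiliary codebook of $\rvu^n$-sequences. Fix $p_\rvu$ and $p_{\rvx|\rvu,\rvs}$ achieving the maximum, introduce a message rate $R$ and a randomization rate $R'$, and set
\begin{equation}
R' = \max\left(I(\rvu;\rvs),\, I(\rvu;\rvye)\right) + 2\eps, \qquad R + R' = I(\rvu;\rvyr) - \eps.
\end{equation}
Generate $2^{n(R+R')}$ codewords $\rvu^n(m,\ell)$, indexed by $m\in[1:2^{nR}]$ and $\ell\in[1:2^{nR'}]$, each drawn i.i.d.\ from $\prod_i p_\rvu$, and declare bin $m$ to be the set $\{\rvu^n(m,\ell)\}_\ell$.

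\textbf{Encoding and reliability.} To transmit $m$ given $\rvs^n$, the encoder searches bin $m$ for an index $\ell$ with $(\rvu^n(m,\ell),\rvs^n)$ jointly typical; since $R' > I(\rvu;\rvs)$, the covering lemma guarantees such an $\ell$ exists with high probability. The channel input is then produced symbol-by-symbol through $\prod_i p_{\rvx|\rvu,\rvs}$. The legitimate receiver performs joint-typicality decoding of the pair $(m,\ell)$ from $\rvyr^n$; because the total codebook rate satisfies $R + R' < I(\rvu;\rvyr)$, the packing lemma yields vanishing decoding error, so $m$ (and $\ell$) are recovered reliably.

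\textbf{Secrecy and the origin of the max.} It remains to show $\frac{1}{n}I(\rvm;\rvye^n)\to 0$. Writing $I(\rvm;\rvye^n) = H(\rvm) - H(\rvm|\rvye^n)$, I would lower-bound the equivocation $H(\rvm|\rvye^n)$. The key is that $R' \ge I(\rvu;\rvye)$, so the $2^{nR'}$ codewords inside each bin overwhelm the eavesdropper's ability to resolve $\rvu^n$ from $\rvye^n$: an eavesdropper that is even told the true $m$ can reliably decode the residual index $\ell$, which shows the randomization is fully ``spent'' on confusing $\rvye^n$ and leaves only negligible information about $m$. Formally one bounds $H(\ell\mid \rvm,\rvye^n)\le n\eps_n$ by Fano's inequality (valid precisely when $R' \ge I(\rvu;\rvye)$) and combines this with the reliable decodability of $(m,\ell)$ at the receiver to conclude $H(\rvm|\rvye^n)\ge nR - n\eps_n$. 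The $\max$ now appears transparently: the \emph{single} randomization layer of rate $R'$ must simultaneously be large enough to contain a state-matching codeword ($R'\ge I(\rvu;\rvs)$) and large enough to saturate the eavesdropper ($R'\ge I(\rvu;\rvye)$), the binding constraint is the larger of the two, and the achievable message rate is $I(\rvu;\rvyr)-R' = I(\rvu;\rvyr)-\max\left(I(\rvu;\rvs),I(\rvu;\rvye)\right)$.

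The hard part will be the equivocation analysis, because the randomization index $\ell$ is \emph{not} drawn uniformly: it is selected as a function of $(m,\rvs^n)$ by the joint-typicality encoder, whereas the standard wiretap equivocation lemma presumes a uniformly chosen confusion index. One must therefore argue that the encoder-induced distribution on the transmitted $(\ell,\rvu^n)$ is close enough to the i.i.d.\ codeword ensemble for the resolvability/Fano step to go through, while correctly accounting for the correlation between the selected codeword and the state $\rvs^n$ that the eavesdropper partially observes through $\rvye^n$. Managing this interaction --- so that the same randomness is validly double-counted for state matching and for secrecy --- is the delicate point, and is exactly what forces the careful choice of $R'$ above.
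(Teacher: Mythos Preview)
The paper does not prove this proposition; it is quoted from the cited references for comparison only, so there is no in-paper argument to match against. Your architecture (a single Gelfand--Pinsker codebook with wiretap binning, bin size $R'$ forced above both $I(\rvu;\rvs)$ and $I(\rvu;\rvye)$) is indeed the construction used in those references.

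Your secrecy step, however, contains a concrete error. You claim $H(\ell\mid\rvm,\rvye^n)\le n\eps_n$ by Fano, ``valid precisely when $R'\ge I(\rvu;\rvye)$.'' The direction is reversed: the packing lemma lets the eavesdropper decode $\ell$ within a revealed bin only when the sub-codebook rate is \emph{below} $I(\rvu;\rvye)$; compare the paper's own secrecy analysis for Theorem~\ref{thm:lbNoDisc}, where the bin size is deliberately set to $2^{n(I(\rvu;\rvye)-\eps_n)}$ precisely so that packing-plus-Fano applies at the eavesdropper. With your choice $R'=\max(I(\rvu;\rvs),I(\rvu;\rvye))+2\eps>I(\rvu;\rvye)$ that step is unavailable, and the chain you wrote does not deliver $H(\rvm\mid\rvye^n)\ge nR-n\eps_n$. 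When $I(\rvu;\rvye)\ge I(\rvu;\rvs)$ the repair is trivial (take $R'$ just below $I(\rvu;\rvye)$, covering still succeeds). The hard case is $I(\rvu;\rvs)>I(\rvu;\rvye)$, where covering forces $R'>I(\rvu;\rvye)$; there the cited proofs either introduce a second layer of sub-bins of size $2^{nI(\rvu;\rvye)}$ inside each message bin so that a Fano step applies at that finer level, or bound $H(\rvu^n\mid\rvm,\rvye^n)$ directly via a list-size argument rather than a decoding argument. You correctly flag in your last paragraph that the non-uniform, state-dependent selection of $\ell$ is the crux---that is exactly why the one-line Fano claim you wrote cannot close the proof.
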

We note that the secret-key rate~\eqref{eq:lowerBoundNoDisc} is in general
strictly better than the secret-message rate~\eqref{eq:secMsgRate}.

\subsection{Secret Keys from Dirty Paper Coding}
\label{subsec:Gaussian} We study the Gaussian case under
an average power constraint. The channel to the legitimate receiver
and the eavesdropper is expressed as:
\begin{equation}\begin{aligned}
\rvyr &= \rvx + \rvs + \rvzr\\
\rvye &= \rvx + \rvs + \rvze,
\end{aligned}\label{eq:GaussianModel}\end{equation} where
$\rvzr \sim \cN(0,1)$ and $\rvze \sim \cN(0,1+\Delta)$ denote the
additive white Gaussian nose and are assumed to be sampled
independently. The state parameter $\rvs \sim \cN(0,Q)$ is also
sampled i.i.d.\ at each time instance and is independent of both
$\rvzr$ and $\rvze$. Furthermore, the channel input satisfies an
average power constraint $E[\rvx^2] \le P$. We assume $\rvs^n$ to be noncausally known to the sender
but not to any other terminals. 

Thus the parameter $P$ denotes the signal-to-noise ratio, the
parameter $Q$ denotes the interference-to-noise-ratio, whereas
$\Delta$ denotes the degradation level of the eavesdropper. We now
provide lower and upper bounds on the secret-key capacity\footnote{Interestingly in the presence of public discussion, we have been able to characterize the secret-key capacity~\cite{khisti:10}.}. We limit our analysis to the case when $P \ge 1$.

\begin{figure*}
\begin{minipage}[b]{0.5\linewidth}
\centering
\includegraphics[width=10cm]{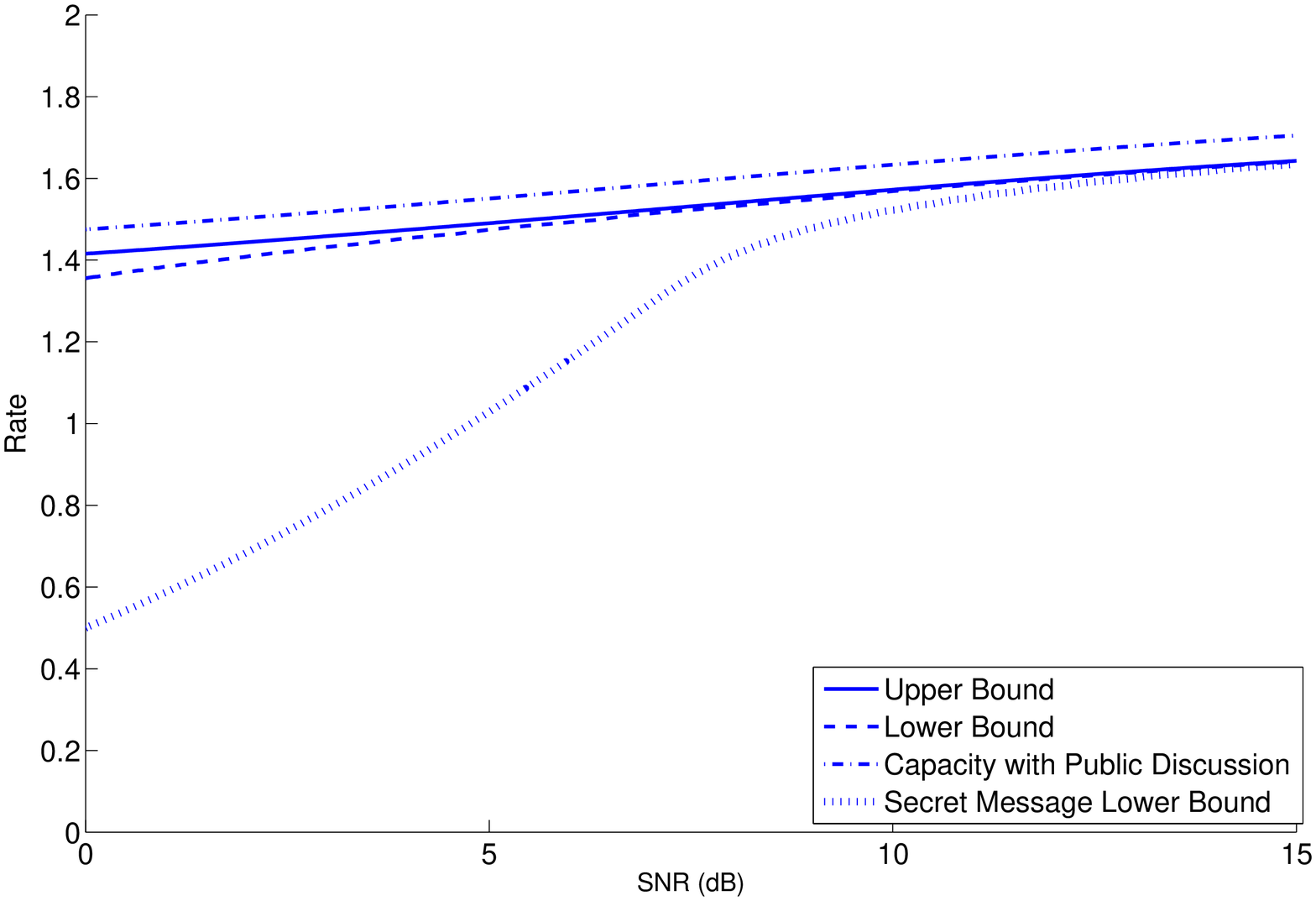}
\end{minipage}
\hspace{0.5cm} 
\begin{minipage}[b]{0.5\linewidth}
\centering
\includegraphics[width=10cm]{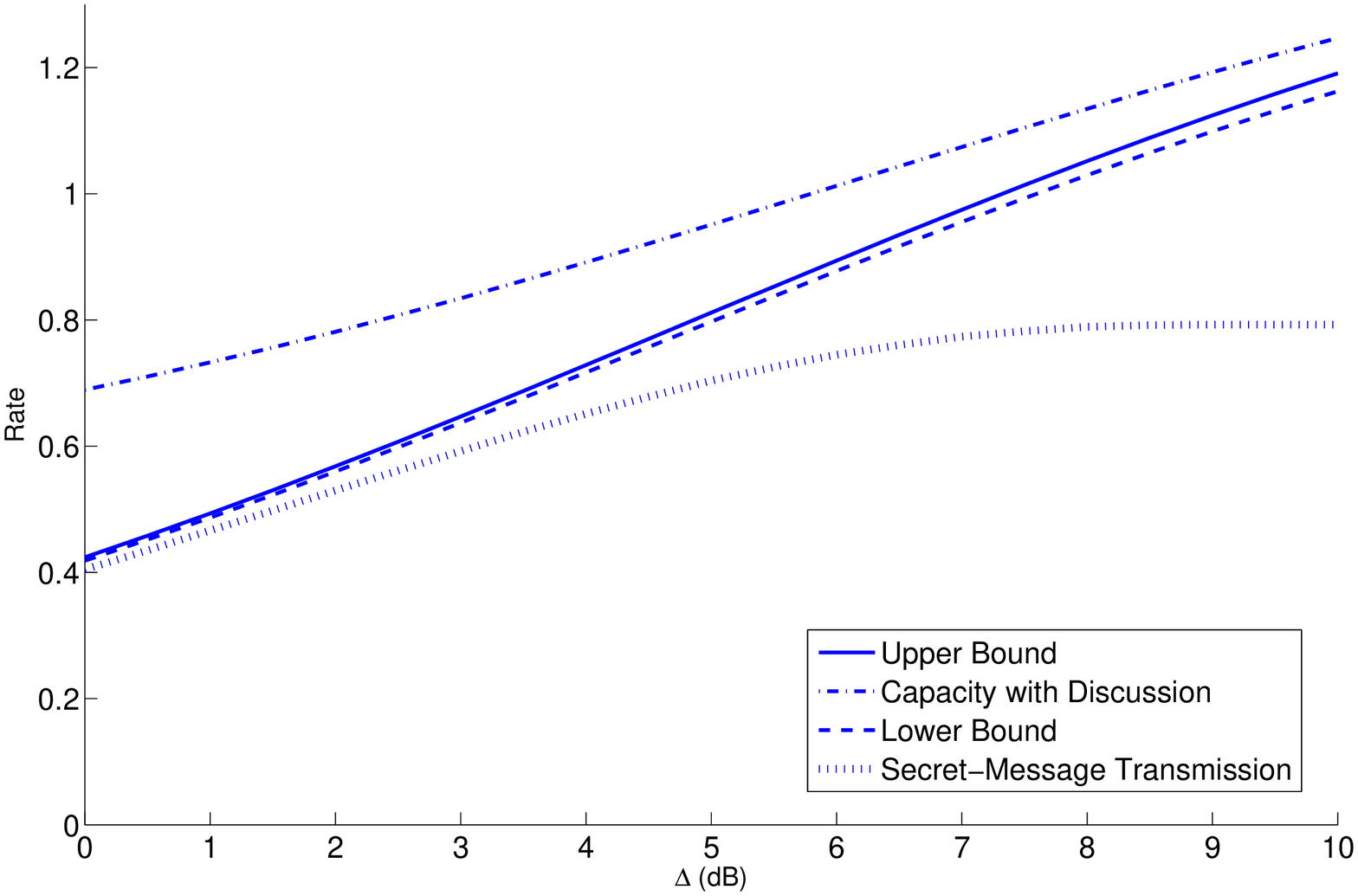}
\end{minipage}
\caption{Bounds on the capacity of the ``secret-keys from dirty paper" channel.
 In the left figure, we
plot the bounds on capacity as a function of SNR (dB) when $Q = 10$ dB and
$\Delta = 10$ dB. The upper-most curve is the capacity with
public-discussion~\cite{khisti:10} whereas the next two curves denote the upper and
lower bounds on the capacity as stated in Prop.~\ref{prop:ubGaussNoDisc} and Prop.~\ref{prop:lbGaussNoDisc}. The dotted curve is the secret message transmission lower bound~\eqref{eq:secMsgRate} evaluated for a jointly Gaussian input distribution. In
the right figure we vary the degradation level at the eavesdropper $\Delta$ (in dB) and compute the secret-key rates for $P=2$ 
and $Q = 2$. The upper-most curve is the secret-key capacity with public discussion~\cite{khisti:10}, the next two curves are the upper and the lower bounds, whereas the dotted curve is the secret message transmission rate evaluated for Gaussian inputs. }
\label{fig:GaussNumerical}
\end{figure*}

\begin{prop}
Assuming that $P \ge 1$, a lower bound on the secret-key agreement capacity is
capacity is given by,
\begin{multline}
R^- =  \frac{1}{2}\log\left(1 + \frac{\Delta(P+Q+2\rho \sqrt{PQ})}{P+Q+1+\Delta + 2\rho \sqrt{PQ}}\right),
\label{eq:lbGaussNoDisc}\end{multline}
where $|\rho| < 1$ and
\begin{equation}
P(1-\rho^2) =1 - \frac{1}{P+Q+1}.\label{eq:rhoCons}
\end{equation}
\label{prop:lbGaussNoDisc}
\end{prop}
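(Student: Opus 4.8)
The plan is to specialize the lower bound of Theorem~\ref{thm:lbNoDisc} to the Gaussian model~\eqref{eq:GaussianModel} by evaluating it at a jointly Gaussian test distribution. Since the theorem is a \emph{maximum} over admissible $(\rvu,\rvx,\rvs)$, any single admissible Gaussian choice already yields an achievable rate, so it suffices to exhibit a good one and compute. First I would let $(\rvx,\rvs)$ be zero-mean jointly Gaussian with $E[\rvx^2]=P$, $E[\rvs^2]=Q$ and correlation coefficient $\rho$, so that $E[\rvx\rvs]=\rho\sqrt{PQ}$ and $\var(\rvx\mid\rvs)=P(1-\rho^2)$. For the auxiliary I would take the useful signal itself, $\rvu=\rvx+\rvs$ --- the natural common reconstruction that the two legitimate terminals can hope to share. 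The Markov condition $\rvu\rightarrow(\rvx,\rvs)\rightarrow(\rvyr,\rvye)$ then holds automatically, and the only side condition left to monitor is~\eqref{eq:lowerBoundNoDiscCons}.

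Second, I would reduce every mutual information to a conditional variance. Setting $S=\var(\rvx+\rvs)=P+Q+2\rho\sqrt{PQ}$, both terminals observe $\rvu$ through additive Gaussian noise, of variance $1$ at the receiver and $1+\Delta$ at the eavesdropper, so $\var\rvyr=S+1$ and $\var\rvye=S+1+\Delta$. Using $I(\rvu;\cdot)=\frac12\log(\var\rvu/\var(\rvu\mid\cdot))$ with $\var(\rvu\mid\rvyr)=S/(S+1)$ and $\var(\rvu\mid\rvye)=S(1+\Delta)/(S+1+\Delta)$, the secrecy objective collapses to
\[
I(\rvu;\rvyr)-I(\rvu;\rvye)=\frac12\log\frac{\var(\rvu\mid\rvye)}{\var(\rvu\mid\rvyr)}=\frac12\log\frac{(1+\Delta)(S+1)}{S+1+\Delta},
\]
which is exactly~\eqref{eq:lbGaussNoDisc} once one uses the identity $(1+\Delta)(S+1)=(S+1+\Delta)+\Delta S$. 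That this difference is nonnegative is precisely because the eavesdropper link is a noisier, stochastically degraded version of the receiver link.

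Third, I would dispose of the constraint and fix $\rho$. The requirement $I(\rvu;\rvyr)\ge I(\rvu;\rvs)$ is equivalent to $\var(\rvu\mid\rvyr)\le\var(\rvu\mid\rvs)$, i.e.\ $S/(S+1)\le P(1-\rho^2)$. Because the displayed rate is strictly increasing in $S$, and $S$ is increasing in $\rho$, the tightest admissible bound is obtained by driving this inequality to equality; solving the resulting equation for the correlation gives the relation~\eqref{eq:rhoCons}, and the hypothesis $P\ge1$ is what ensures a feasible $|\rho|<1$ exists. The main obstacle is the optimization bookkeeping: one must justify that $\rvu=\rvx+\rvs$ (as opposed to a general dirty-paper auxiliary $\rvu=\rvx+\alpha\rvs$ with an extra independent dither) maximizes the secrecy difference at fixed $\rho$ --- which holds because that difference is monotone in the squared correlation between $\rvu$ and $\rvyr$, and the latter is maximized when $\rvu$ is aligned with the transmitted signal --- and that activating the decodability constraint is optimal, leaving $\rho$ pinned down by~\eqref{eq:rhoCons}.
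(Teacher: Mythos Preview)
Your approach is essentially the paper's: specialize Theorem~\ref{thm:lbNoDisc} with jointly Gaussian $(\rvx,\rvs)$ having correlation $\rho$ and auxiliary $\rvu=\rvx+\rvs$, compute $I(\rvu;\rvyr)-I(\rvu;\rvye)$ (you via conditional variances, the paper via conditional differential entropies --- the same calculation), and then deal with the decodability constraint~\eqref{eq:lowerBoundNoDiscCons}. One cosmetic difference: the paper first takes the one-parameter family $\rvu=\rvx+\alpha\rvs$, obtains the closed form in $\alpha$, and then shows the secrecy difference is maximized at $\alpha=1$, whereas you jump directly to $\alpha=1$ and defer the justification to a remark at the end. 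Either order is fine.

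There is, however, a genuine slip in your third step. You correctly identify the constraint as $P(1-\rho^2)\ge S/(S+1)$ with $S=P+Q+2\rho\sqrt{PQ}$, but you then claim that driving it to equality ``gives the relation~\eqref{eq:rhoCons}.'' It does not: equality reads $P(1-\rho^2)=1-\tfrac{1}{P+Q+1+2\rho\sqrt{PQ}}$, while~\eqref{eq:rhoCons} is $P(1-\rho^2)=1-\tfrac{1}{P+Q+1}$; these agree only at $\rho=0$, which is excluded when $P\ge1$. The paper does \emph{not} derive~\eqref{eq:rhoCons} by activating the constraint; rather it \emph{postulates}~\eqref{eq:rhoCons} as the choice of $\rho$ and then verifies that the decodability constraint is met for that choice (using the chain $P(1-\rho^2)\ge 1-\frac{1}{S+1}$ compared against $1-\frac{1}{P+Q+1}$). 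So to match the proposition as stated you should not argue via optimality of $\rho$ but simply check feasibility for the $\rho$ pinned down by~\eqref{eq:rhoCons}; your monotonicity-in-$S$ argument, if pursued, would produce a different (implicitly defined) correlation, not~\eqref{eq:rhoCons}.
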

\begin{prop}
An upper bound on the secret-key
capacity is given by,
\begin{equation}
R^+ = \frac{1}{2}\log\left(1 +
\frac{\Delta(P+Q + 2\sqrt{PQ})}{P+Q+1+\Delta + 2\sqrt{PQ}}\right)\label{eq:ubGaussNoDisc}
\end{equation}\label{prop:ubGaussNoDisc}
\end{prop}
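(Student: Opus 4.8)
The plan is to specialize the minimax upper bound of \thrmref{thm:ubNoDisc} to the Gaussian model~\eqref{eq:GaussianModel}. Since that theorem gives $C \le \min_{p^\star \in \cP}\max_{p_{\rvx|\rvs}} I(\rvx,\rvs;\rvyr|\rvye)$, for an upper bound it suffices to evaluate the inner maximization at a \emph{single} convenient law $p^\star \in \cP$: any such choice dominates the minimum and therefore bounds $C$. The two sub-channel marginals are fixed by~\eqref{eq:GaussianModel}, so the only freedom in $\cP$ is the joint law of the noise pair $(\rvzr,\rvze)$. I would pick the coupling that makes the eavesdropper physically degraded, writing $\rvze = \rvzr + \rvzt$ with $\rvzt \sim \cN(0,\Delta)$ independent of $\rvzr$. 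Then $\rvze \sim \cN(0,1+\Delta)$, so the marginals are preserved and $p^\star \in \cP$, while $\rvye = \rvyr + \rvzt$ with $\rvzt$ independent of $\rvyr$.

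First I would decompose $I(\rvx,\rvs;\rvyr|\rvye) = h(\rvyr|\rvye) - h(\rvyr|\rvye,\rvx,\rvs)$. Conditioned on $(\rvx,\rvs)$ the pair $(\rvyr,\rvye)$ is a deterministic shift of $(\rvzr,\rvze)$, so the second term equals $h(\rvzr|\rvze)$, which does not depend on the input law. Because $(\rvzr,\rvze)$ is jointly Gaussian with $\var(\rvzr|\rvze) = \Delta/(1+\Delta)$, this term is the constant $\frac{1}{2}\log\bigl(2\pi e\,\Delta/(1+\Delta)\bigr)$. The maximization thus collapses to maximizing $h(\rvyr|\rvye)$ over $p_{\rvx|\rvs}$ subject to $E[\rvx^2]\le P$.

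For the inner maximization I would invoke a maximum-entropy argument. Since $\rvye = \rvyr + \rvzt$ with $\rvzt$ independent of $\rvyr$, the conditional maximum-entropy bound together with the fact that the minimum mean-square error never exceeds the linear MMSE gives $h(\rvyr|\rvye)\le \frac{1}{2}\log\bigl(2\pi e\, V\Delta/(V+\Delta)\bigr)$, where $V = \var(\rvyr) = \var(\rvx+\rvs)+1$; this Gaussian bound is increasing in $V$ and is attained when $\rvx$ is a scalar multiple of $\rvs$, confirming the value is exact for this $p^\star$. It then remains to maximize $V$: by Cauchy--Schwarz, $\var(\rvx+\rvs)\le \bigl(\sqrt{\var(\rvx)}+\sqrt{\var(\rvs)}\bigr)^2 \le (\sqrt P+\sqrt Q)^2 = P+Q+2\sqrt{PQ}$, so $V \le P+Q+2\sqrt{PQ}+1$. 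Substituting and subtracting the constant $h(\rvzr|\rvze)$, the $2\pi e$ factors and a factor $\Delta$ cancel, leaving $C \le \frac{1}{2}\log\frac{(P+Q+2\sqrt{PQ}+1)(1+\Delta)}{P+Q+2\sqrt{PQ}+1+\Delta}$, which is exactly~\eqref{eq:ubGaussNoDisc} once the argument is rewritten as $1+\Delta(P+Q+2\sqrt{PQ})/(P+Q+1+\Delta+2\sqrt{PQ})$. I expect the main obstacle to be this inner maximization over arbitrary, possibly non-Gaussian, $p_{\rvx|\rvs}$: the crux is the chain bounding $h(\rvyr|\rvye)$ by a Gaussian expression through the conditional maximum-entropy inequality and the MMSE $\le$ LMMSE comparison, after which the Cauchy--Schwarz step on the input--state correlation and the final algebra are routine.
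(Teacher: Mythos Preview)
Your proposal is correct and follows essentially the same route as the paper: choose the degraded coupling $\rvze=\rvzr+\rvzt$ with $\rvzt\sim\cN(0,\Delta)$, write $I(\rvx,\rvs;\rvyr|\rvye)=h(\rvyr|\rvye)-h(\rvzr|\rvze)$, and bound $h(\rvyr|\rvye)$ by its Gaussian value at the maximal variance $V=P+Q+1+2\sqrt{PQ}$. The only difference is expository: where the paper simply asserts that $h(\rvyr|\rvye)$ is maximized by a Gaussian input (citing \cite{khistiWornell:MISOME}), you unpack this into conditional max-entropy, the MMSE\,$\le$\,LMMSE comparison, and Cauchy--Schwarz on $\var(\rvx+\rvs)$, which is a perfectly good way to justify the same inequality.
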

It can be readily verified that the upper and lower bounds are close in several interesting regimes. In Fig.~\ref{fig:GaussNumerical} we numerically plot these bounds and state some properties  below. We omit the proof due to space constraints. 
\begin{prop}
The upper and lower bounds on secret-capacity  satisfy the following
\begin{align}
&R_+ - R_- \le \frac{1}{2}~~\mrm{bit/symbol} \label{eq:univGap} \\
&\lim_{P\rightarrow\infty} R_+ - R_- = 0 \label{eq:highINR}\\
&\lim_{Q\rightarrow\infty} R_+ - R_- = 0 \label{eq:lowINR}
\end{align}
\end{prop}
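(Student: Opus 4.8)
The plan is to reduce both bounds to a single scalar function and then control their difference with one master inequality. First I would rewrite the rates in \eqref{eq:lbGaussNoDisc} and \eqref{eq:ubGaussNoDisc} in a common form. Setting $c = P+Q+1$ and $u = 2t\sqrt{PQ}$, a direct simplification of the argument of the logarithm gives
\[
1 + \frac{\Delta(P+Q+u)}{P+Q+1+\Delta+u} = (1+\Delta)\,\frac{c+u}{c+\Delta+u},
\]
so that both bounds are values of the single function $h(u) = \tfrac12\log\bigl[(1+\Delta)(c+u)/(c+\Delta+u)\bigr]$: namely $R^+ = h(u_1)$ with $u_1 = 2\sqrt{PQ}$ (the case $t=1$) and $R^- = h(u_\rho)$ with $u_\rho = 2\rho\sqrt{PQ}$. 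Since the optimizing $\rho$ in \eqref{eq:rhoCons} is the positive root (as $h$ is increasing, the larger $\rho$ gives the larger rate), we have $0<\rho<1$ and hence $0 \le u_\rho \le u_1$.

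Next I would record that $h$ is increasing, $h'(u) = \tfrac{1}{2\ln2}\,\Delta/[(c+u)(c+\Delta+u)] > 0$, which immediately gives nonnegativity of the gap together with the integral representation
\[
R^+ - R^- = \frac{1}{2\ln 2}\int_{u_\rho}^{u_1} \frac{\Delta\,du}{(c+u)(c+\Delta+u)}.
\]
Bounding the integrand by discarding the harmless factor $\Delta/(c+\Delta+u) \le 1$ then yields the \emph{master inequality}
\[
0 \;\le\; R^+ - R^- \;\le\; \frac{1}{2}\log\frac{c+u_1}{c+u_\rho}
= \frac{1}{2}\log\frac{P+Q+1+2\sqrt{PQ}}{P+Q+1+2\rho\sqrt{PQ}},
\]
from which all three claims follow.

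For the universal gap \eqref{eq:univGap}, I would note that $u_1 = 2\sqrt{PQ} \le P+Q \le c$ by the arithmetic--geometric mean inequality, while $u_\rho \ge 0$; hence $(c+u_1)/(c+u_\rho) \le 2c/c = 2$ and the right-hand side is at most $\tfrac12\log 2 = \tfrac12$ bit/symbol. For \eqref{eq:highINR}, the constraint \eqref{eq:rhoCons} gives $1-\rho^2 = \tfrac1P\bigl(1 - 1/(P+Q+1)\bigr)$, so $\rho \to 1$ as $P\to\infty$; then $u_\rho \to u_1$, the ratio in the master inequality tends to $1$, and the gap vanishes. For \eqref{eq:lowINR}, as $Q\to\infty$ both $u_1/c$ and $u_\rho/c$ behave like $2\sqrt{PQ}/(P+Q+1) = O(\sqrt{P/Q}) \to 0$, so again the ratio tends to $1$ and the gap vanishes.

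The computations are all elementary; the only genuine content is the initial observation that the two bounds collapse onto the one-parameter family $h(u)$, after which the inequality $2\sqrt{PQ} \le P+Q+1$ does the work for the half-bit gap. The mildest care is needed in the two limits, where the master ratio tends to $1$ for two structurally different reasons --- the endpoints $u_\rho, u_1$ coalescing as $\rho\to 1$ in the high-SNR regime, versus both endpoints becoming negligible against $c$ when the interference $Q$ dominates.
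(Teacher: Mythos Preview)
The paper explicitly omits the proof of this proposition (``We omit the proof due to space constraints''), so there is no reference argument to compare against. Your argument is correct and complete: the reduction of both bounds to the single increasing function $h(u)=\tfrac12\log[(1+\Delta)(c+u)/(c+\Delta+u)]$ is exactly the right observation, and the master inequality together with $2\sqrt{PQ}\le P+Q<c$ and $u_\rho\ge 0$ cleanly delivers the half-bit gap.

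One cosmetic remark on the high-SNR limit: the phrase ``$u_\rho\to u_1$'' is slightly ambiguous since both quantities diverge with $P$. What you actually need (and implicitly use) is that $u_1-u_\rho=2(1-\rho)\sqrt{PQ}\sim\sqrt{Q/P}\to 0$, which follows from $1-\rho^2\sim 1/P$; combined with $c+u_\rho\to\infty$ this forces the ratio to $1$. Alternatively, and perhaps more uniformly with your $Q\to\infty$ argument, you could simply note that $u_1/c=2\sqrt{PQ}/(P+Q+1)\le 1$ is bounded while $\rho\to 1$, so $(1+u_1/c)/(1+\rho\,u_1/c)\to 1$. Either way the conclusion stands.
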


\subsection{Symmetric CSI}
\label{subsec:CSI}
Consider the special case where the state sequence $\rvs$ is also revealed to the legitimate receiver. In this case we have a complete characterization of the secret-key capacity.
\begin{thm}
The secret-key capacity for the channel model in section~\ref{subsec:Model} when the state sequence $\rvs^n$ is also revealed to the decoder is given by
\begin{equation}
C_\mrm{sym} = \max_{p_{\rvu|\rvs(\cdot)}p_{\rvx|\rvu,\rvs(\cdot)}} I(\rvu;\rvyr|\rvs) - I(\rvu;\rvye|\rvs) + H(\rvs|\rvye),\label{eq:symCap}
\end{equation}
where the maximization is over all auxilary random variables $\rvu$ that obey the Markov chain
$\rvu \rightarrow (\rvx,\rvs) \rightarrow (\rvyr,\rvye)$. Additionally it suffices to limit the cardinality of the auxiliary variable to $|\cS|(1 + |\cX|)$ in~\eqref{eq:symCap}.
\label{thm:symCSI}
\end{thm}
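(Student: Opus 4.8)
The plan is to prove the achievability and the converse separately, and to obtain the cardinality bound by the support lemma. Throughout I use that in the symmetric-CSI setting the decoder computes $\rvl = h_n(\rvyr^n,\rvs^n)$, so Fano's inequality gives $H(\rvk\mid\rvyr^n,\rvs^n)\le n\delta_n$.

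\emph{Achievability.} Fix a law $p_{\rvu\mid\rvs}\,p_{\rvx\mid\rvu,\rvs}$. Since the receiver also knows $\rvs^n$, no Gelfand--Pinsker binning against the receiver is required and a codebook generated conditionally on the state suffices; this is precisely why the decodability constraint \eqref{eq:lowerBoundNoDiscCons} of Theorem~\ref{thm:lbNoDisc} disappears here. I would generate $2^{nR'}$ codewords $\rvu^n(m)\sim\prod_i p_{\rvu\mid\rvs}(\cdot\mid\rvs_i)$ and transmit $\rvx^n\sim\prod_i p_{\rvx\mid\rvu,\rvs}$. With $R'\le I(\rvu;\rvyr\mid\rvs)$ the receiver decodes $m$ reliably from $(\rvyr^n,\rvs^n)$, so the two legitimate terminals share the pair $(m,\rvs^n)$ while the eavesdropper holds only $\rvye^n$. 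I then distill the key by applying a universal hash (privacy amplification) to $(m,\rvs^n)$; by the leftover-hash lemma the extractable secret-key rate equals $\frac{1}{n} H(m,\rvs^n\mid\rvye^n)$. Decomposing $H(m,\rvs^n\mid\rvye^n)=H(\rvs^n\mid\rvye^n)+H(m\mid\rvs^n,\rvye^n)$ and choosing $R'=I(\rvu;\rvyr\mid\rvs)$, the first term is $\approx nH(\rvs\mid\rvye)$ and, because a wiretap sub-binning of rate $I(\rvu;\rvye\mid\rvs)$ lets an eavesdropper (even one given $\rvs^n$) resolve only $nI(\rvu;\rvye\mid\rvs)$ of the $nR'$ bits of $m$, the second term is $\approx n\big(I(\rvu;\rvyr\mid\rvs)-I(\rvu;\rvye\mid\rvs)\big)$; their sum is exactly $C_\mrm{sym}$.

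\emph{Converse.} Starting from $nR\le H(\rvk)+n\eps_n$, using the secrecy constraint $I(\rvk;\rvye^n)\le n\eps_n$ and the Fano bound above, I reach $nR\le I(\rvk;\rvyr^n,\rvs^n)-I(\rvk;\rvye^n)+n\eps_n'$. The delicate step is to spend the single secrecy term correctly: the identity $I(\rvk;\rvye^n)=I(\rvk;\rvs^n)+I(\rvk;\rvye^n\mid\rvs^n)-I(\rvk;\rvs^n\mid\rvye^n)$ turns this into
\begin{equation}
nR\le\big[I(\rvk;\rvyr^n\mid\rvs^n)-I(\rvk;\rvye^n\mid\rvs^n)\big]+I(\rvk;\rvs^n\mid\rvye^n)+n\eps_n'.
\end{equation}
The state term is bounded by $I(\rvk;\rvs^n\mid\rvye^n)\le H(\rvs^n\mid\rvye^n)\le\sum_i H(\rvs_i\mid\rvye_i)=nH(\rvs\mid\rvye)$, using that the state is i.i.d. For the bracketed channel term I single-letterize by the Csisz\'ar sum identity with the auxiliary $\rvu_i=(\rvk,\rvyr^{i-1},\rvye_{i+1}^n,\rvs^{i-1},\rvs_{i+1}^n)$, which yields $\sum_i\big[I(\rvu_i;\rvyr_i\mid\rvs_i)-I(\rvu_i;\rvye_i\mid\rvs_i)\big]$; memorylessness supplies the required Markov chain $\rvu_i\to(\rvx_i,\rvs_i)\to(\rvyr_i,\rvye_i)$. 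A time-sharing variable and $n\to\infty$ then give $R\le C_\mrm{sym}$.

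\emph{Main obstacle and cardinality.} The crux is the converse decomposition above: defining the auxiliary through the key $\rvk$ rather than through the message $\rvm_\rvx$, and splitting $I(\rvk;\rvye^n)$ by the stated identity, is essential---bounding $I(\rvk;\rvyr^n,\rvs^n\mid\rvye^n)$ naively by replacing $\rvk$ with $(\rvm_\rvx,\rvs^n)$ is loose by the nonnegative amount $I(\rvu;\rvye\mid\rvs,\rvyr)$ and fails to reproduce the difference form. The cardinality bound $|\cU|\le|\cS|(1+|\cX|)$ follows from the Fenchel--Eggleston--Carath\'eodory support lemma: $\rvu$ affects the objective only through $p_{\rvx,\rvs\mid\rvu}$, and it suffices to preserve the joint law $p_{\rvx,\rvs}$ (which fixes $H(\rvs\mid\rvye)$ and the unconditioned output entropies) together with the $\rvu$-dependent output entropies, whose count yields the stated value.
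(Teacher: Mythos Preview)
Your achievability is a legitimate alternative to the paper's. The paper simply invokes Theorem~\ref{thm:lbNoDisc} with $\orvyr=(\rvyr,\rvs)$ and then does the algebra~\eqref{eq:ft1}--\eqref{eq:ft2}; it also gives a second, causal scheme based on multiplexed wiretap codebooks plus binning of $\rvs^n$. Your privacy-amplification route is fine in principle, but the assertion ``the first term is $\approx nH(\rvs\mid\rvye)$'' is nontrivial: $\rvye^n$ is produced through a coded input, so $\frac{1}{n}H(\rvs^n\mid\rvye^n)\to H(\rvs\mid\rvye)$ is not automatic. The paper isolates exactly this as a lemma (Lemma~1) and proves it; you should too.

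Your converse, however, has a real gap, and it is precisely the ``delicate step'' you flag. After the split
\[
nR\le \bigl[I(\rvk;\rvyr^n\mid\rvs^n)-I(\rvk;\rvye^n\mid\rvs^n)\bigr]+I(\rvk;\rvs^n\mid\rvye^n)+n\eps_n',
\]
two things go wrong. First, the Csisz\'ar sum identity applied to the bracket conditioned on $\rvs^n$ does \emph{not} yield $\sum_i\bigl[I(\rvu_i;\rvy_{r,i}\mid\rvs_i)-I(\rvu_i;\rvy_{e,i}\mid\rvs_i)\bigr]$ with your $\rvu_i$; it yields those terms \emph{conditioned additionally on} $\rvv_i=(\rvy_r^{i-1},\rvy_{e,i+1}^n,\rvs^{n\setminus i})$, and the residual $\sum_i\bigl[I(\rvv_i;\rvy_{r,i}\mid\rvs_i)-I(\rvv_i;\rvy_{e,i}\mid\rvs_i)\bigr]$ has no sign. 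Second, and more fundamentally, even with the correct $\rvv_i$-conditioned form, the channel summand is an average over the \emph{conditional} laws $p(\cdot\mid\rvv_i=v)$, whereas your state summand $H(\rvs_i\mid\rvy_{e,i})$ is under the \emph{marginal} law at time $i$; these are different input distributions, so the per-$i$ total is not an evaluation of the objective in~\eqref{eq:symCap}. If you then bound each piece by its own maximum you get $\max_p\bigl[I(\rvu;\rvyr\mid\rvs)-I(\rvu;\rvye\mid\rvs)\bigr]+\max_p H(\rvs\mid\rvye)$, which is in general strictly larger than $C_{\mrm{sym}}$ and therefore does not prove the converse. (The equality $\sum_i H(\rvs_i\mid\rvy_{e,i})=nH(\rvs\mid\rvye)$ is also not literally true, since the $i$th marginal depends on the code.)

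The paper sidesteps all of this by \emph{not} splitting before single-letterizing: it applies the Csisz\'ar sum identity directly to $I(\rvk;\rvyr^n,\rvs^n)-I(\rvk;\rvye^n)$, treating $(\rvyr,\rvs)$ as a single receiver symbol, to obtain $R\le\max_{p_{\rvu,\rvx}}\bigl[I(\rvu;\rvyr,\rvs)-I(\rvu;\rvye)\bigr]$, and only then performs the decomposition into $I(\rvu;\rvyr\mid\rvs)-I(\rvu;\rvye\mid\rvs)+H(\rvs\mid\rvye)$ at the single-letter level (where one is free to take $\rvu^\star=(\rvu,\rvs)$, as in~\eqref{eq:ft2}). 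That keeps the two contributions coupled under the \emph{same} distribution throughout. Your cardinality argument via the support lemma is fine.
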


The achievability in~\eqref{eq:symCap} follows from~\eqref{eq:lowerBoundNoDisc} by augmenting $\orvyr = (\rvyr, \rvs)$. Observe that~\eqref{eq:lowerBoundNoDiscCons} is redundant as 
$I(\rvu;\rvyr,\rvs)-I(\rvu;\rvs) \ge 0$ holds. Furthermore the expression in~\eqref{eq:lowerBoundNoDisc} can be simplified as follows \begin{align}
R^- &= \max_{p_\rvu, p_{\rvx|\rvs,\rvu}}
I(\rvu;\rvyr,\rvs) - I(\rvu;\rvye)  \notag\\
&= \max_{p_\rvu, p_{\rvx|\rvs,\rvu}}
I(\rvu;\rvyr |\rvs) - I(\rvu;\rvye | \rvs)  + I(\rvs ; \rvu |\rvye) \label{eq:ft1}\\
&= \max_{p_\rvu, p_{\rvx|\rvs,\rvu}} 
I(\rvu;\rvyr |\rvs) - I(\rvu;\rvye | \rvs)  + H(\rvs |\rvye) \label{eq:ft2}
\end{align}
where the last relation follows by noting that if $\rvu$ is an optimal choice in~\eqref{eq:ft1} then by selecting $\rvu^\star = (\rvu,\rvs)$ will leave the difference in the two mutual information terms unchanged but increase the second term $H(\rvs|\rvye)$ as specified in~\eqref{eq:ft2}. Notice that~\eqref{eq:ft2} is identical to~\eqref{eq:symCap}. The converse follows by an application of Csiszar's Lemma and is provided in section~\ref{subsec:converse}

We provide another achievability scheme for Theorem~\ref{thm:symCSI} that only requires causal knowledge of $\rvs^n$  at the encoder.  The scheme is based on the following interpretation of~\eqref{eq:symCap}. The term $I(\rvu;\rvyr|\rvs) - I(\rvu;\rvye|\rvs)$ is the rate of a multiplexed wiretap codebook constructed assuming that all the three terminals have knowledge of $\rvs^n$. The second term $H(\rvs|\rvye)$ is the rate of the additional secret key that can be produced by exploiting the fact that $\rvs^n$ is only known to the sender and the legitimate terminal. This scheme is causal since the multiplexed code uses only current state to decide which codebook to use. Furthermore, since the state is known to the sender and receiver, the second term is also causal. 

We note that the capacity expression~\eqref{eq:symCap} captures an interesting tension between two competing forces in choosing the optimal distribution. To maximize the contribution of the rate obtained from the multiplexed wiretap codebook, it is desirable to select $\rvu$ to be strongly correlated with $\rvs$. However doing so will leak more information about $\rvs$ to the wiretapper and reduce the rate contribution of the second codebook. To maximize the contribution of the common state sequence, we need to select an input that masks the state sequence from the eavesdropper~\cite{merhavShamai}.  We illustrate this tradeoff  via an example in section~\ref{subsec:NumericalExample}. 

Finally it can be easily verified that the the expression~\eqref{eq:symCap} simplifies in the following special case.
\begin{corol}
\label{corol:LessNoisySymCap}
Suppose that for each $s \in \cS$ the channel $p_{\rvyr,\rvye|\rvs=s, \rvx}(y_r,y_e|s,x)$ is such that the eavesdropper's channel is less noisy compared to the legitimate receiver's channel. Then the secret-key capacity with $\rvs^n$ revealed to both the legitimate terminals is
\begin{equation}
C = \max_{p_{\rvx|\rvs}} H(\rvs|\rvye).\label{eq:LessNoisySymCap}
\end{equation}
\end{corol}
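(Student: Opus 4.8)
The plan is to deduce the corollary directly from Theorem~\ref{thm:symCSI} by arguing that, under the less-noisy hypothesis, the wiretap-codebook contribution $I(\rvu;\rvyr|\rvs) - I(\rvu;\rvye|\rvs)$ in~\eqref{eq:symCap} can never be made positive, so that only the term $H(\rvs|\rvye)$ survives the maximization. Thus the whole proof reduces to a matching pair of bounds on the objective of~\eqref{eq:symCap}.

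First I would establish the upper bound $C_\mrm{sym} \le \max_{p_{\rvx|\rvs}} H(\rvs|\rvye)$. Fix any admissible pair $p_{\rvu|\rvs}$, $p_{\rvx|\rvu,\rvs}$, so that $\rvu \rightarrow (\rvx,\rvs) \rightarrow (\rvyr,\rvye)$. Conditioning the law on the event $\{\rvs = s\}$, the joint distribution factors as $p(u|s)\,p(x|u,s)\,p(y_r,y_e|x,s)$, which is exactly a Markov chain $\rvu \rightarrow \rvx \rightarrow (\rvyr,\rvye)$ through the channel indexed by $s$. The hypothesis that, for each $s$, the eavesdropper's channel is less noisy than the receiver's then gives $I(\rvu;\rvye|\rvs=s) \ge I(\rvu;\rvyr|\rvs=s)$ for every such auxiliary variable. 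Averaging over $s$ with weights $p(s)$ yields $I(\rvu;\rvyr|\rvs) - I(\rvu;\rvye|\rvs) \le 0$. Since $H(\rvs|\rvye)$ depends on the chosen distribution only through the induced input marginal $p_{\rvx|\rvs}(x|s) = \sum_u p(u|s)\,p(x|u,s)$, the objective in~\eqref{eq:symCap} is at most $H(\rvs|\rvye) \le \max_{p_{\rvx|\rvs}} H(\rvs|\rvye)$, and taking the maximum over admissible pairs preserves this inequality.

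For the matching lower bound I would exhibit an explicit choice attaining it: take $\rvu$ to be constant (independent of everything else), so that $I(\rvu;\rvyr|\rvs) = I(\rvu;\rvye|\rvs) = 0$ and $p_{\rvx|\rvu,\rvs}$ collapses to an arbitrary $p_{\rvx|\rvs}$. The objective of~\eqref{eq:symCap} then equals $H(\rvs|\rvye)$ for that input law, and optimizing over $p_{\rvx|\rvs}$ recovers $\max_{p_{\rvx|\rvs}} H(\rvs|\rvye)$. Combining the two bounds gives~\eqref{eq:LessNoisySymCap}.

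The one step that warrants care is the conditional application of the less-noisy property: I must verify that conditioning the chain $\rvu \rightarrow (\rvx,\rvs) \rightarrow (\rvyr,\rvye)$ on a fixed $s$ genuinely reproduces the Markov chain $\rvu \rightarrow \rvx \rightarrow (\rvyr,\rvye)$ under the per-state channel $p(y_r,y_e|x,s)$, so that the per-state less-noisy inequality applies, and that these per-state inequalities aggregate correctly under the $p(s)$-average. Everything else is a routine substitution into the capacity formula of Theorem~\ref{thm:symCSI}, together with the observation that $H(\rvs|\rvye)$ is a functional of $p_{\rvx|\rvs}$ alone.
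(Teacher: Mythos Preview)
Your proposal is correct and is precisely the verification the paper has in mind: the text merely states that ``it can be easily verified'' that~\eqref{eq:symCap} simplifies to~\eqref{eq:LessNoisySymCap} under the less-noisy hypothesis, and the surrounding intuition (``the wiretap channel cannot contribute to the secrecy'') is exactly your upper-bound argument that $I(\rvu;\rvyr|\rvs)-I(\rvu;\rvye|\rvs)\le 0$ for every admissible $\rvu$. Your careful check that conditioning on $\rvs=s$ produces a genuine Markov chain $\rvu\rightarrow\rvx\rightarrow(\rvyr,\rvye)$ through the per-state channel, together with the constant-$\rvu$ achievability, fills in the details the paper omits.
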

Intuitively, when the wiretap channel cannot contribute to the secrecy,~\eqref{eq:LessNoisySymCap} states that transmitter should select an input that masks the state from the output as much as possible.
\subsection{Symmetric CSI: Numerical Example}
\label{subsec:NumericalExample}
\begin{figure*}
\begin{minipage}[b]{0.5\linewidth} 
\centering
\includegraphics[scale=0.3]{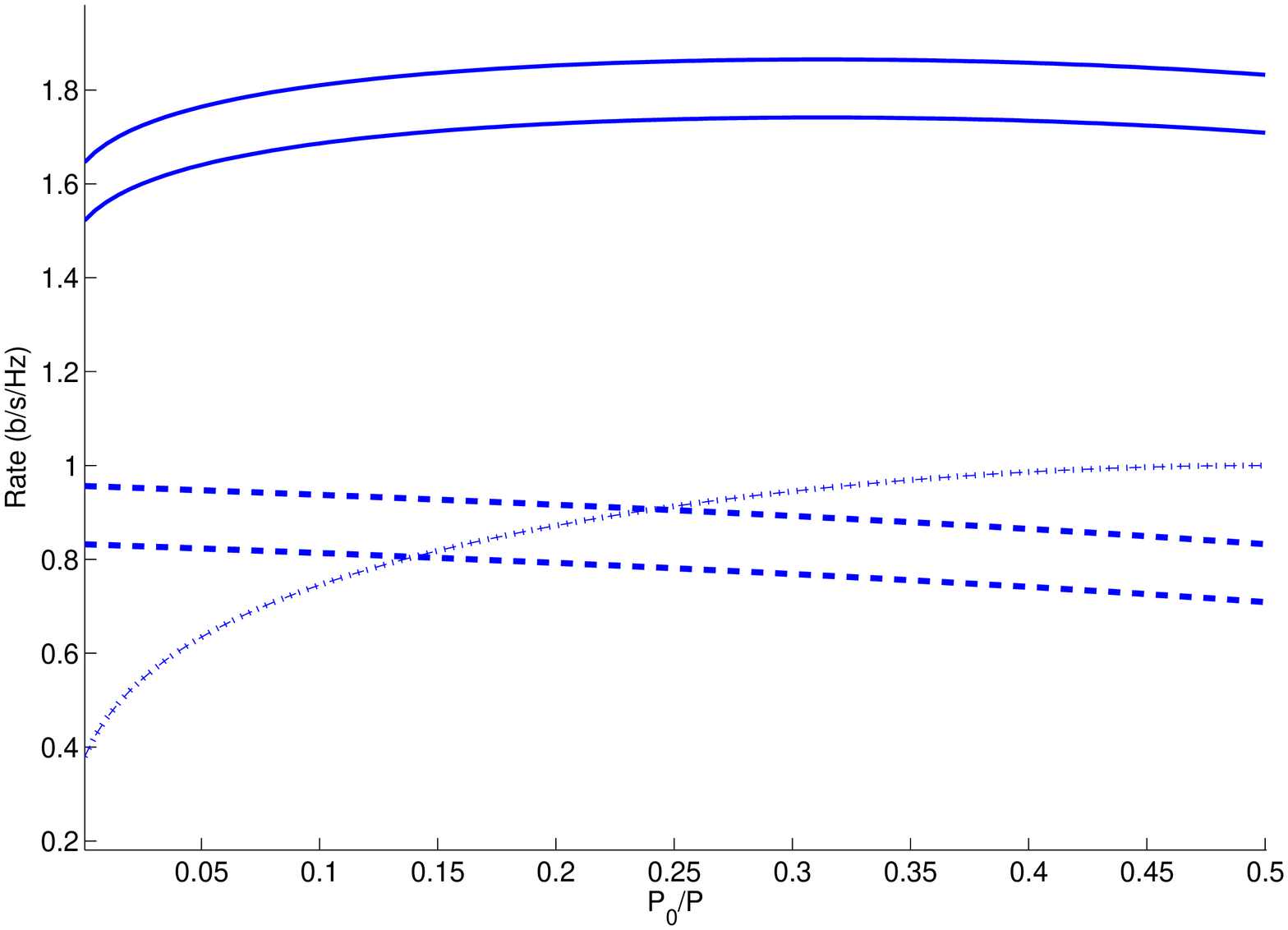}
\caption{The achievable secret-key rate as a fraction of power allocated to the state $\rvsr=0$ and SNR = 17 dB. The solid curve denotes the secret-key rate, the dashed curve denotes the rate of the secret-message, while the dotted curve denotes the conditional entropy term $H(\rvsr|\rvse = 1, \rvye=y_e)$ in~\eqref{eq:numerical1}. The upper solid and dashed curves  denote the case of public discussion while the  other solid and dashed curves denote the case of no public discussion.}\label{fig:trade}
\end{minipage}\hspace{0.5cm}
\begin{minipage}[b]{0.5\linewidth}
\centering
\includegraphics[scale = 0.3]{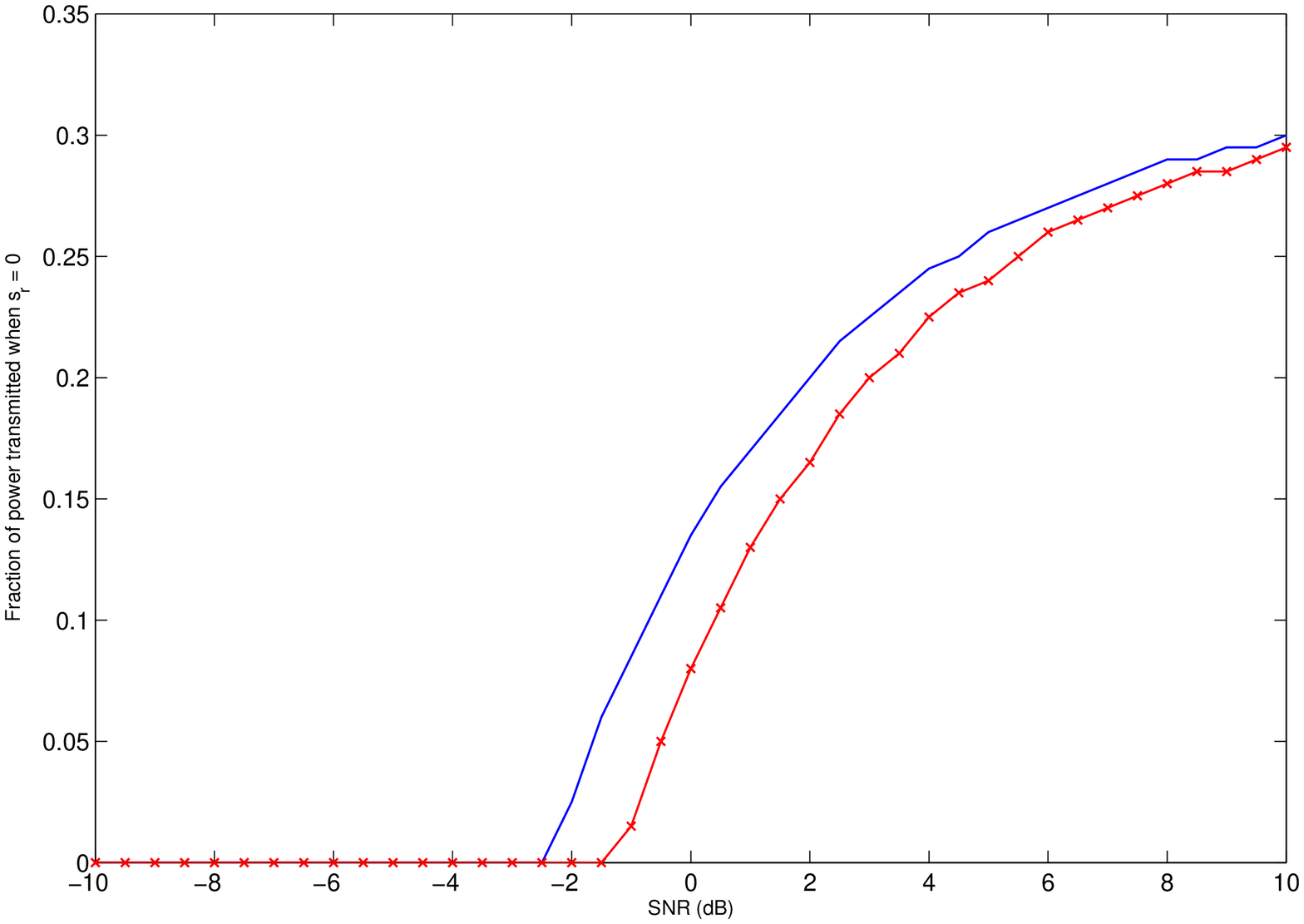}
\caption{Optimal fraction of power that must be allocated to the
state $\rvsr =0$ to maximize the secret-key rate with Gaussian
inputs.  The curve marked with a ($\times$) denotes the case of
public discussion while the other curve denotes the case of  no
public discussion.}\label{fig:opt}
\end{minipage}
\end{figure*}

It can be easily seen that for the dirty paper coding example in section~\ref{subsec:Gaussian}, the secret-key capacity when $\rvs$ is also revealed to the legitimate receiver is infinity. More generally higher the entropy of $\rvs$, higher will be the gains in the secret-key capacity with symmetric CSI.  In this section illustrate the secret-key rate for an on-off channel for the receivers:
\begin{equation}
\begin{aligned}
\rvyr&= \rvs_{r} \rvx + \rvzr\\
\rvye&= \rvs_{e} \rvx + \rvze,
\end{aligned}\label{eq:fading}\end{equation}
where both $\rvs_{r}, \rvs_{e} \in \{0,1\}$, the random variables
are mutually independent and $\Pr(\rvs_{r}=0)= \Pr(\rvse = 0) =
0.5$. Furthermore we assume that $\rvsr$ is revealed to the
legitimate terminals, whereas the eavesdropper is revealed
$\tilde{\rvye}=(\rvse,\rvye)$. The noise random variables are
mutually independent, zero mean and unit variance Gaussian random
variables and the power constraint is that $E[\rvx^2] \le P$.

We evaluate the secret-key rate expression for Gaussian inputs i.e., $\rvu=\rvx \sim \cN(0,  P_0)$
when $\rvs_\mrm{r} = 0$ and $\rvu=\rvx \sim \cN(0, P_1)$ when $\rvs_\mrm{r} = 1$. Further to satisfy the average power constraint we have that
$P_0 +P_1 \le 2P$. An achievable rate from Theorem~\ref{thm:symCSI} 
\begin{align}
R &= I(\rvx;\rvyr|\rvsr) - I(\rvx;\tilde{\rvye}| \rvsr) + H(\rvsr| \tilde{\rvye}) \\
&=I(\rvx;\rvyr|\rvsr) - I(\rvx;\rvye,\rvse| \rvsr) + H(\rvsr| \rvse, \rvye)\\
&= \frac{1}{8}\log(1 + P_1) + \frac{1}{2}E_{\rvye}[H(p(y_e),1-p(y_e))] + \frac{1}{2},  \label{eq:numerical1}
\end{align}
where we have introduced
\begin{equation}
p(y_e) = \frac{\cN_{y_e}(0,P_0+1)}{\cN_{y_e}(0,P_0+1) +
\cN_{y_e}(0,P_1+1)}
\end{equation}
the aposterior distribution $\Pr(\rvsr=0|y_e)$ and the notation $\cN_{\rvye}(0,\sigma^2)$ denotes the zero mean Gaussian distribution with variance $\sigma^2$ evaluated at $\rvye$ and where~\eqref{eq:numerical1} follows through a straightforward computation.  

In Fig.~\ref{fig:trade} we numerically evaluate this rate for $\mrm{SNR} = 17$ dB. For comparison we also plot the corresponding rate with public discussion~\cite{khisti:09} 
\begin{equation}R_\mrm{disc}= \frac{1}{8}\log(1 + 2P_1) + \frac{1}{2}E_{\rvye}[H(p(y_e),1-p(y_e))] + \frac{1}{2}.\label{eq:numerical2}\end{equation}

In Fig.~\ref{fig:trade} the solid curves show the secret key rate with and without public discussion,  while the dashed curve is the entropy
$H(\rvsr|\rvse=1,\rvye)$ and the dotted curve denotes contribution of the wiretap code. Note that in general there is a tradeoff
between these two terms. To maximize the  conditional entropy we set $P_0 = P_1=P/2$, while to maximize the wiretap codebook rate we need to set
$P_0 = 0$ and $P_1 = P$. The resulting secret-key rate is maximized by selecting a power allocation that balances these two terms.  The optimum fraction of power  transmitted in the state $\rvsr=0$ as a function of the signal to noise ratio is shown in
Fig.~\ref{fig:opt}. Note that no power is transmitted when the
signal-to-noise ratio is below $\approx -2.5 dB$. In this regime the
channels are sufficiently noisy so that $H(\rvsr|\rvye,\rvse=1)
\approx 1$ even with $P_0 = 0$ and hence all the available power is
used for transmitting the secret-message. As the signal-to-noise
ratio increases more information regarding $\rvsr$ gets leaked to
the eavesdropper and to compensate for this effect, a non-zero
fraction of  power is transmitted when $\rvsr = 0$.

\section{Secret key generation with noncausal Transmitter CSI }

In this section we provide Proofs of Theorem~\ref{thm:lbNoDisc} and~\ref{thm:ubNoDisc} i.e., the coding scheme and the upper bound for the secret key agreement problem. 

\subsection{Proof of Theorem~\ref{thm:lbNoDisc}}
The coding theorem involves constructing a common sequence $\rvu^n$ at the legitimate terminals and using it to generate a secret key.
\begin{figure}
	\centering
	\includegraphics[scale=0.35]{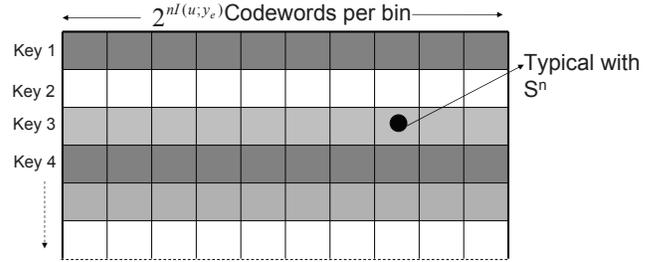}
\caption{Codebook for the secret key agreement problem. A total of $\approx 2^{nI(\rvu;\rvyr)}$ codewords are generated i.i.d. $p_\rvu(\cdot)$ and partitions into $2^{nR}$ bins so that thare are $2^{nI(\rvu;\rvye)}$
sequences in each bin. Given $\rvs^n$, a jointly typical sequence $\rvu^n$ is selected and its bin index constitutes the secret key. }
\end{figure}

\subsubsection{Codebook Generation}
Assume that the input distribution is such that $I(\rvu;\rvyr) > I(\rvu;\rvs)$ as required in Theorem~\ref{thm:lbNoDisc}. Let $\eps_n$ be a sequence of non-negative numbers that goes to zero such that $2\eps_n < I(\rvu;\rvyr) - I(\rvu;\rvs)$.
\begin{itemize}
\item Generate a total of  $T= 2^{n(I(\rvu;\rvyr) -
2\eps_n)}$ sequences. Each sequence is sampled i.i.d.\ from a
distribution $p_\rvu(\cdot)$. Label them $\rvu_1^n,\ldots, \rvu_T^n$.

\item Select a rate $R = I(\rvu;\rvyr)- I(\rvu;\rvye)-\eps_n$ and
randomly partition the set sequences in the previous step into
$2^{nR}$ bins so that there are $2^{n(I(\rvu;\rvye)-\eps_n)}$
sequences in each bin.
\end{itemize}

\subsubsection{Encoding}
\begin{itemize}
\item Given a state sequence $\rvs^n$ the encoder selects a sequence
$\rvu^n$ randomly from the list of all possible sequences that are
jointly typical with $\rvs^n$. Let the index of this sequence be $L$.
\item At time $i=1,2,\ldots, n$ the encoder transmits symbol
$\rvx_i$ generated by sampling the distribution
$p_{\rvx|\rvu,\rvs}(\cdot| u_i, s_i)$.
\end{itemize}

\subsubsection{Secret-key generation}
\begin{itemize}
\item The decoder upon observing $\rvyr^n$  finds a sequence
$\rvu^n$ jointly typical with $\rvyr^n$.
\item Both encoder and the decoder declare the bin-index of $\rvu^n$
to be the secret-key.
\end{itemize}

\subsubsection{Error Probability Analysis}
An error occurs only if one of the following events occur:
\begin{align}
\cE_1 &= \{(\rvu^n(l), \rvs^n) \notin \cT_\eps^n(\rvu,\rvs) \text{ for all } 1 \le l \le T \}\\
\cE_2 &= \{(\rvu^n(L), \rvyr^n) \notin \cT_\eps^n(\rvu,\rvyr)  \}\\
\cE_3 &= \{(\rvu^n(l), \rvyr^n) \in \cT_\eps^n(\rvu,\rvyr) \text{ for some }  l \neq L \}
\end{align}
Since the number of sequences $T > 2^{nI(\rvu;\rvs)}$ it follows from the Covering Lemma~\cite[Chapter 3]{ElGamalKim} that $\Pr(\cE_1) \rightarrow 0$ as $n\rightarrow \infty$. Furthermore let $\cE_1^c = \{(\rvu^n,\rvs^n,\rvx^n) \in \cT_{\eps'}^n(\rvu,\rvs,\rvx)\}$ and $\Pr(\cE_1^c) \rightarrow 1$ as $n \rightarrow \infty$ for any
$\eps' < \eps$. Since $p(\rvyr^n|\rvu^n(L), \rvx^n,\rvs^n) = \prod_{i=1}^n p(\rvy_{ri}|\rvu_i,\rvx_i,\rvs_i)$ it follows from the conditional typicality Lemma~\cite[Chapter 2]{ElGamalKim} that
$\Pr(\cE_2 \cap \cE_1^c) \rightarrow 0$ as $n\rightarrow\infty$. Finally since every $\rvu^n(l)$ is generated i.i.d.\ $p_{\rvu}(u_i)$ and is independent of $\rvyr^n$ 
for $l \neq L$ it follows from the Packing Lemma~\cite[Chapter 3]{ElGamalKim} that $\Pr(\cE_3) \rightarrow 0$ if $T < 2^{nI(\rvu;\rvyr)}$.
  
\subsubsection{Secrecy Analysis}
We need to show that for the proposed encoder and decoder, the
equivocation at the eavesdropper satisfies
\begin{equation}
\label{eq:equivCondn}\frac{1}{n}H(\rvk|\rvye^n)= I(\rvu;\rvyr)-
I(\rvu;\rvye) + o_n(1),
\end{equation}where $o_n(1)$ is a term that goes to zero as $n\rightarrow
\infty$.

Note that while the key $\rvk$ in general can be a function of $(\rvs^n,\rvm_\rvx)$ as indicated in~\eqref{eq:def:Enc}, in our coding scheme the secret key is a deterministic functino of $\rvu^n$ and hence we have

 \begin{align*} \frac{1}{n}H(\rvk|\rvye^n)&=
\frac{1}{n}H(\rvk,\rvu^n|\rvye^n) - \frac{1}{n}H(\rvu^n|\rvye^n,
\rvk)\\
&= \frac{1}{n}H(\rvu^n|\rvye^n) - \frac{1}{n}H(\rvu^n|\rvye^n,
\rvk)\\
&= \frac{1}{n}H(\rvu^n|\rvye^n) - \eps_n\\
\end{align*}
where the last step follows from the fact that there are 
$T_0 = 2^{n(I(\rvu;\rvye)-\eps_n)}$ sequences in each bin. Again applying the packing lemma
we can show that with high probability the eavesdropper uniquely finds the codeword $\rvu^n(L)$ jointly typical with $\rvye^n$ in this set
and hence Fano's Inequality implies that
$$\frac{1}{n}H(\rvu^n|\rvye^n,
\rvk) \le \eps_n.$$
It remains to show that
$$\frac{1}{n}H(\rvu^n|\rvye^n) \ge I(\rvu;\rvyr)-
I(\rvu;\rvye) - o_n(1).$$ Using the chain rule of the joint entropy we have
\begin{align}
&\frac{1}{n}H(\rvu^n|\rvye^n) = \frac{1}{n}H(\rvu^n) +
\frac{1}{n}H(\rvye^n|\rvu^n)- \frac{1}{n}H(\rvye^n)\\
&= \frac{1}{n}H(\rvu^n) + \frac{1}{n}H(\rvye^n|\rvu^n,\rvs^n)-
\frac{1}{n}H(\rvye^n) + \frac{1}{n}I(\rvs^n;\rvye^n|\rvu^n).
\label{eq:condEntLB}
\end{align}
We now appropriately bound each term in~\eqref{eq:condEntLB}. First
note that since the sequence $\rvu^n$ is uniformly distributed among
the set of all possible codeword sequences, it follows that
\begin{align}
\frac{1}{n}H(\rvu^n)&= \frac{1}{n}\log_2 |\cC|  \notag\\
&= I(\rvu;\rvyr) - 2\eps_n \label{eq:EquivTerm1}
\end{align}

Next,   as verified below, the channel to the
eavesdropper $(\rvu^n,\rvs^n) \rightarrow \rvye^n$, is memoryless:
\begin{align*}
&p_{\rvye^n|\rvu^n,\rvs^n}(y_e^n|u^n,s^n) \\&= \sum_{x^n \in \cX^n}
p_{\rvye^n|\rvu^n,\rvs^n,\rvx^n}(y_e^n|u^n,s^n,x^n)p_{\rvx^n|\rvu^n,\rvs^n}(x^n|u^n,s^n)\\
&= \sum_{x^n \in \cX^n} \prod_{i=1}^n
p_{\rvy_{e}|\rvu,\rvs,\rvx}(y_{e,i}|u_i,s_i,x_i)p_{\rvx|\rvu,\rvs}(x_i|u_i,s_i)\\
&= \prod_{i=1}^n
\sum_{x_i \in \cX}p_{\rvy_{e}|\rvu,\rvs,\rvx}(y_{e,i}|u_i,s_i,x_i)p_{\rvx|\rvu,\rvs}(x_i|u_i,s_i)\\
&= \prod_{i=1}^n p_{\rvy_{e}|\rvu,\rvs}(y_{e,i}|u_i,s_i)\\
\end{align*}
The second step above follows from the fact that the channel is
memoryless and the symbol $\rvx_i$ at time $i$ is generated as a
function of $(\rvu_i,\rvs_i)$. Hence we have that
\begin{align}
&\frac{1}{n}H(\rvye^n|\rvs^n,\rvu^n) 
= \sum_{i=1}^n H(\rvy_{e,i}|\rvs_i,\rvu_i).\label{eq:EquivTerm2}
\end{align}

Furthermore note that
\begin{equation}
\frac{1}{n}H(\rvye^n) \le \sum_{i=1}^n H(\rvy_{ei}).
\label{eq:YeBound}\end{equation}

Finally, in order to lower bound the term $I(\rvs^n;\rvye^n|\rvu^n)$
we let $J$ to be a random variable which equals 1 if
$(\rvs^n,\rvu^n)$ are jointly typical. Note that $\Pr(J=1) = 1 -
o_n(1)$.
\begin{align}
&\frac{1}{n}I(\rvs^n;\rvye^n|\rvu^n) = \frac{1}{n}H(\rvs^n|\rvu^n)
- \frac{1}{n}H(\rvs^n|\rvu^n,\rvye^n) \notag\\
&\ge \frac{1}{n}H(\rvs^n|\rvu^n,J=1)\Pr(J=1)
- \frac{1}{n}H(\rvs^n|\rvu^n,\rvye^n)\notag\\
&\ge \frac{1}{n}H(\rvs^n|\rvu^n,J=1) -
\frac{1}{n}H(\rvs^n|\rvu^n,\rvye^n) - o_n(1) \notag\\
&\ge H(\rvs|\rvu) - \frac{1}{n}H(\rvs^n|\rvu^n,\rvye^n) - o_n(1)
\label{eq:NoMem}\\
&\ge H(\rvs|\rvu) -
\frac{1}{n}\sum_{i=1}^nH(\rvs_i|\rvu_i,\rvy_{e,i}) -
o_n(1)\label{eq:mutInfBound}\end{align} where~\eqref{eq:NoMem}
follows from the fact that $\rvs^n$ is an i.i.d.\ sequence and hence
conditioned on the fact that $(\rvs^n,\rvu^n)$ is a pair of typical
sequence there are $2^{n H(\rvs|\rvu)- no_n(1)}$ possible sequences
$\rvs^n$.

Substituting~\eqref{eq:EquivTerm1},~\eqref{eq:EquivTerm2},~\eqref{eq:YeBound}
and~\eqref{eq:mutInfBound} in the lower bound~\eqref{eq:condEntLB}
and using the fact that as $n \rightarrow \infty$, the summation
converges to the mean values,
\begin{align*}
&\frac{1}{n}H(\rvk|\rvye^n) \\
 &= I(\rvu;\rvyr)+ H(\rvye|\rvu, \rvs)- H(\rvye)+ H(\rvs|\rvu)\!-
\! H(\rvs|\rvu,\rvye) \!- \!o_n(1)\!\\
 &= I(\rvu;\rvyr)-I(\rvye;\rvs|\rvu) - I(\rvye;\rvu)
 +I(\rvye;\rvs|\rvu)-o_n(1)\\
 &= I(\rvu;\rvyr) - I(\rvye;\rvu) - o_n(1)
\end{align*}as required.

\subsection{Proof of Theorem~\ref{thm:ubNoDisc}}
A sequence of length-$n$ code satisfies:
\begin{align}
&\frac{1}{n}H(\rvk|\rvyr^n) \le \eps_n~\label{eq:fano}\\
&\frac{1}{n}H(\rvk|\rvye^n) \ge \frac{1}{n}H(\rvk)-
\eps_n~\label{eq:secrecy}
\end{align}
where~\eqref{eq:fano} follows from the Fano's inequality since the
receiver is able to recover the secret-key $\rvk$ given $\rvyr^n$
and~\eqref{eq:secrecy} is a consequence of the secrecy constraint.
Furthermore, note that $\rvk \rightarrow (\rvx^n,\rvs^n) \rightarrow
(\rvyr^n,\rvye^n)$ holds as the encoder generates the secret key
$\rvk$. Thus we can bound the rate $R = \frac{1}{n}H(\rvk)$ as
below:
\begin{align}
nR &\le I(\rvk;\rvyr^n|\rvye^n) + 2n\eps_n \notag\\
&\le I(\rvk,\rvs^n,\rvx^n;\rvyr^n|\rvye^n) + 2n\eps_n \notag\\
&\le I(\rvs^n,\rvx^n;\rvyr^n|\rvye^n) + H(\rvk|\rvs^n,\rvx^n) + 2n\eps_n \notag\\
&= I(\rvs^n,\rvx^n;\rvyr^n|\rvye^n) + 3n\eps_n \label{eq:Fano_k}\\
&\le \sum_{i=1}^n I(\rvs_i,\rvx_i;\rvy_{r,i}|\rvy_{e,i}) + 3n\eps_n \label{eq:memChannel}\\
&\le nI(\rvx,\rvs;\rvyr|\rvye) + 3n\eps_n
\end{align}
where~\eqref{eq:Fano_k} follows from the Fano Inequality because $\rvk$ can be obtained from $(\rvx^n,\rvs^n)$,~\eqref{eq:memChannel} from from the fact that the channel is memoryless and the last step follows from the concavity of the conditional
entropy term $I(\rvx,\rvs;\rvyr|\rvye)$ in the input distribution
$p_{\rvx,\rvs}$ (see e.g.,~\cite{khistiTchamWornell:07}).

Finally since the secret-key capacity only depends on the marginal
distribution of the channel and not on the joint distribution we can
minimize over all joint distributions with fixed marginal
distributions.

\section{Gaussian Case}
\begin{figure}
\begin{center}
\includegraphics[scale=0.4]{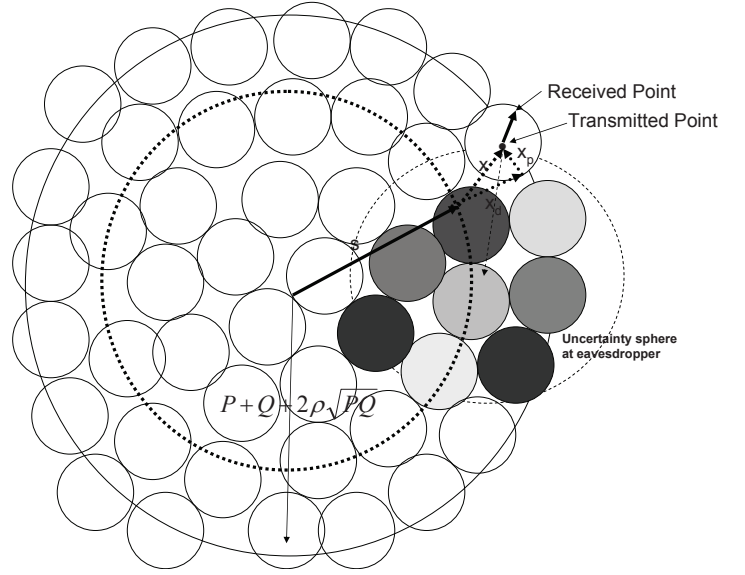}
\end{center}\caption{Secret-key agreement codebook for the dirty paper channel. The transmit sequence $\rvx^n$ is selected so that $\rvu^n = \rvx^n+\rvs^n$ is a sequence in the codebook $\cC$.   The smaller spheres above denote the noise uncertainty at the legitimate receiver. Their centres are the codewords in $\cC$. The larger sphere denotes the noise uncertainty at the eavesdropper. Our binning of smaller spheres  guarantees that the noise uncertainty sphere of the eavesdropper has all possible messages, resulting in (asymptotically) perfect equivocation.}
\label{fig:dpc}\end{figure}

We develop the lower and upper bounds on secret-key agreement capacity for the Gaussian channel model.
\subsection{Proof of Prop.~\ref{prop:lbGaussNoDisc}}
Recall that $\rvs \sim \cN(0,Q)$. Choose  $\rvx \sim \cN(0,P)$ to be a Gaussian random variable
independent of $\rvs$  and let $E[\rvx\rvs] = \rho \sqrt{PQ}$. Select $\rvu = \rvx + \alpha \rvs$ and the lower bound follows by evaluating
\begin{align*}
R &= I(\rvu;\rvyr) - I(\rvu;\rvye) \\
&= h(\rvu|\rvye) - h(\rvu|\rvyr)\end{align*}
Further evaluating each of the terms above with $\rvu = \rvx + \al \rvs$, note that
\begin{align*}&h(\rvu|\rvye)= h(\rvx  + \al \rvs |\rvx + \rvs + \rvze)= \\ & \frac{1}{2}\log2\pi e\left(P + \al^2 Q + 2\al\rho \sqrt{PQ} \right. - \\&\qquad\left.
\frac{(P+\al Q + (1+\al)\rho\sqrt{PQ})^2}{P+Q+1+\Delta + 2\rho\sqrt{PQ}}\right)\end{align*}and\begin{align*}
&h(\rvu|\rvyr)= h(\rvx + \al \rvs | \rvx + \rvs + \rvzr)=\\ &\quad \frac{1}{2}\log2\pi e\left(P + \al^2 Q + 2\al\rho\sqrt{PQ} \right.- \\&\qquad \left.
\frac{(P+\al Q + \rho(1+\al)\sqrt{PQ})^2}{P+Q+1+2\sqrt{PQ}}\right).\end{align*}

This yields that
\begin{multline}
R= \frac{1}{2}\log\left(1 + \frac{\Delta}{1
+\frac{PQ(\al-1)^2(1-\rho^2)}{P+ \al^2Q +2\rho\al\sqrt{PQ}}}\right)\\
\quad +\frac{1}{2}\log\left(\frac{P+Q+1 + 2\rho\sqrt{PQ}}{P+Q+1+\Delta+2\rho\sqrt{PQ}}\right).
\end{multline}
Note that the first term in the expression above is maximized when $\al =1$. In this case we have that
\begin{align}
R &= \frac{1}{2}\log\left(\frac{(1 + \Delta)(P+Q+1 + 2\rho\sqrt{PQ})}{P+Q+1+\Delta+2\rho\sqrt{PQ}}\right) \\
&= \frac{1}{2}\log\left( 1 + \frac{\Delta (P+Q  + 2\rho\sqrt{PQ})}{P+Q+1+\Delta+2\rho\sqrt{PQ}}\right)
\end{align}
as required.

To complete the proof we show that the choice $\al = 1$ is indeed feasible when $P\ge 1$ and 
$(P,\rho)$ satisfy \eqref{eq:rhoCons}.

In particular the constraint~\eqref{eq:lowerBoundNoDiscCons} requires
that
\begin{align*}
&h(\rvu|\rvs) \ge h(\rvu|\rvyr)\\
&\Rightarrow h(\rvx |\rvs) \ge h(\rvx + \rvs|\rvx + \rvs + \rvzr)\\
& \Rightarrow \frac{1}{2}\log P(1-\rho^2) \ge 
\frac{1}{2}\log\left(\frac{P+ Q + 2\rho\sqrt{PQ})}{P+Q+1+2\rho\sqrt{PQ}}\right).
\end{align*}Rearranging,
\begin{align}
P(1-\rho^2) &\ge 1-\frac{1}{P+Q+1 + 2\rho\sqrt{PQ}} \ge 1-\frac{1}{P+Q+1}
\end{align}
as required.

It is worth comparing the choice of the auxiliary variable $\rvu = \rvx + \rvs$ in the present problem with the choice of optimal $\rvu$
in the dirty paper coding problem~\cite{costa:83}. While the input $\rvx$ is independent of $\rvs$ in~\cite{costa:83}, as illustrated in Fig.~\ref{fig:dpc} the 
optimal $\rvx$ in the secret-key problem has a component along $\rvs$. This is because scaling the interference sequence increases the secret-key rate.
Secondly recall that in~\cite{costa:83} we find the auxiliary codeword $\rvu^n$ that is closest to $\al \rvs^n$ where $\al = \frac{P}{P+N}$. In contrast this MMSE scaling is not performed in the secret-key problem.

\subsection{Proof of Prop.~\ref{prop:ubGaussNoDisc}}

We evaluate the upper bound in Theorem~\ref{thm:ubNoDisc} for the
choice $\rvze = \rvzr + \rvz_\delta$, where $\rvz_\delta \sim
\cN(0,\Delta)$ is independent of $\rvzr$.

\begin{align*}
&I(\rvs,\rvx;\rvyr|\rvye) = h(\rvyr|\rvye)-
h(\rvyr|\rvye,\rvx,\rvs)\\
&= h(\rvyr|\rvye)-
h(\rvzr|\rvze)\\
&\le \frac{1}{2}\log\left(P+Q+1 + 2\sqrt{PQ}-
\frac{(P+Q+1 + 2\sqrt{PQ})^2}{P+Q+1+\Delta + 2\sqrt{PQ}}\right)-
\\&\quad-\frac{1}{2}\log\left(1- \frac{1}{1+\Delta}\right)
\end{align*}
where we have used the fact that the conditional entropy
$h(\rvyr|\rvye)$ is maximized by a Gaussian distribution~\cite{khistiWornell:MISOME}. The above
expression gives~\eqref{eq:ubGaussNoDisc}.

\section{Symmetric CSI}

We establish the secret-key capacity for the case of symmetric channel state information  i.e., when $\rvs^n$ is revealed to both the transmitter and the legitimate receiver.

\subsection{Achievability for Theorem~\ref{thm:symCSI}}
As explained in section~\ref{subsec:CSI} the achievability result follows directly from Theorem~\ref{thm:lbNoDisc} by replacing $\rvyr$ with $\orvyr = (\rvyr,\rvs)$ in the lower bound expression. We nevertheless provide an alternate scheme that only requires the knowledge of causal CSI at the transmitter. The idea is to use a different wiretap codebook for each realization of the state variable. In particular suppose that $\cS = \{s_1,\ldots, s_M\}$ denote the set of available states. Since the encoder and the decoder are both aware of the state realization $\rvs_i$ and  can use this common knowledge to select the appropriate codebook for transmission.  These codebooks are  constructed assuming that the eavesdropper is also revealed the state. Suppose that we fix the distribution $p_{\rvu,\rvx|\rvs=\rvs_i}(\cdot)$ in~\eqref{eq:symCap}. Let \begin{equation}\label{eq:Ri}R_i = I(\rvu;\rvyr|\rvs = s_i)-I(\rvu;\rvye|\rvs=s_i)\end{equation} and $p_i = \Pr(\rvs = s_i)$.  For each $i=1,2\ldots, M$, a wiretap codebook of length $np_i$ and rate $R_i$ is constructed and used to transmit a message $\rvk_i$. Another independent key $\rvk_s$ of rate $R_s = H(\rvs|\rvye)$ is then generated by exploiting the fact that $\rvs^ n$ is not known to the eavesdropper.

\subsubsection{Codebook Construction}

\begin{itemize}
\item For each $i=1,\ldots, M$ generate a codebook $\cC_i$ of rate $R_i-2\eps_n$ and length $n_i = n(p_i-\eps_n)$ by sampling the codewords i.i.d.\ from the distribution $p_{\rvu|\rvs}(\cdot|s_i)$. 
\item Construct a codebook $\cC_s$ where the set of all typical sequences $\rvs^n$ of size $2^{n(H(s)-2\eps_n)}$ is partitioned into $2^{n(R_s -\eps_n)}$ bins each containing $2^{n(I(\rvs;\rvye)-\eps_n)}$ sequences. 
\end{itemize}

\subsubsection{Encoding}
\begin{itemize}
\item For each $i=1,\ldots, M$ the transmitter selects a random message $\rvk_i$ and a random codeword sequence $\rvt_i^{n_i}$ in the corresponding in  the corresponding bin of $\cC_i$. 
\item Upon observing $\rvs(j) = s_i$ at time $t=j$, it selects the next available symbol of $\rvt_i^{n_i}$ and samples the channel input symbol from the distribution $p_{\rvx|\rvs,\rvu}$.
\item At the end of the transmission it looks for the bin index of $\rvs^n$ in $\cC_s$ and declares this to be $\rvk_s$.
\item The overall secret-key is $(\rvk_1,\ldots, \rvk_M,\rvk_s)$.
\end{itemize}

\subsubsection{Decoding}
\begin{itemize}
\item The decoder divides $\rvyr^n$ into subsequences $(\rvy_{1}^{n_1},\ldots, \rvy_{M}^{n_M})$, where the subsequences $\rvy_i^{n_i}$ is obtained by collecting the symbols of $\rvyr^n$ when $\rvs = s_i$.
\item For $i=1,\ldots, M$ it searches for a codeword $\rvt_i^{n_i}$ in $\cC_i$ that is jointly typical with $\rvy_i^{n_i}$. If no such codeword or multiple codewords is found an error is declared. Otherwise the bin index of $\rvt_i^{n_i}$ is taken as declared as the message $\hat{\rvk}_i$.
\end{itemize}

Through standard arguments it can be shown that the error probability in decoding at the legitimate receiver vanishes as $n\rightarrow\infty$ provided we select the rates according to~\eqref{eq:Ri}. We omit the details due to space constraints.

\subsubsection{Secrecy Analysis}
First, consider splitting $\rvye^n = (\rvy_{e1}^{n_1},\ldots, \rvy_{eM}^{n_M})$ where the subsequence $\rvy_{ej}^{n_j}$ is obtained by grouping the symbols of $\rvye^n$ when $\rvs = s_j$. From the construction of the wiretap codebook $\cC_j$ it follows that\begin{equation}\label{eq:WTequiv1}\frac{1}{n}H(\rvk_j|\rvy_{ej}^{n_j}) \ge \frac{1}{n}H(\rvk_j)-\eps_n,\qquad j=1,\ldots, M\end{equation}
Next since the messages are selected independently and the encoding functions are also independent it follows that
\begin{align}
&~\frac{1}{n}H(\rvk_j|\rvk_1,\ldots, \rvk_{j-1},\rvk_{j+1},\ldots, \rvk_M, \rvye^n,\rvs^n) \notag\\
&=\frac{1}{n}H(\rvk_j|\rvy_{ej}^n) \ge \frac{1}{n}H(\rvk_j) - \eps_n \label{eq:WTequiv2}
\end{align}
Thus by the chain rule we have that
\begin{equation}
\frac{1}{n}H(\rvk_1,\ldots, \rvk_M|\rvye^n,\rvs^n) \ge R_0 - \eps_n \label{eq:ChContr}
\end{equation}
where $R_0 = H(\rvk_1,\ldots, \rvk_M) = I(\rvu;\rvyr|\rvs)-I(\rvu;\rvye|\rvs)$. To complete the secrecy analysis we require the following additional result
\begin{lemma}
For any input distribution $p_{\rvu,\rvx|\rvs}$ such that $I(\rvu;\rvyr|\rvs)> I(\rvu;\rvye|\rvs)$ we have that
\begin{equation}
\frac{1}{n}H(\rvs^n|\rvye^n)\ge \frac{1}{n}H(\rvs|\rvye)- o_n(1).\label{eq:SrcContr}
\end{equation}
\end{lemma}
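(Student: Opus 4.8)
The plan is to single-letterize the equivocation through $H(\rvs^n|\rvye^n) = H(\rvs^n) - I(\rvs^n;\rvye^n) = nH(\rvs) - I(\rvs^n;\rvye^n)$, which uses only that $\rvs^n$ is i.i.d. Hence the claim (whose right-hand side I read as the single-letter $H(\rvs|\rvye)$) is equivalent to the upper bound $I(\rvs^n;\rvye^n) \le nI(\rvs;\rvye) + no_n(1)$. Writing $I(\rvs^n;\rvye^n) = H(\rvye^n) - H(\rvye^n|\rvs^n)$ and bounding $H(\rvye^n) \le \sum_{t=1}^n H(\rvy_{e,t})$ by subadditivity --- whose normalized value tends to the single-letter $H(\rvye)$ since each $\rvy_{e,t}$ carries the marginal $p_{\rvye}$ --- the whole problem reduces to the matching \emph{lower} bound $H(\rvye^n|\rvs^n) \ge nH(\rvye|\rvs) - no_n(1)$.

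To obtain this lower bound I would introduce the transmitted auxiliary sequence $\rvu^n$, i.e.\ the interleaving of the per-state codewords $\rvt_i^{n_i}$, and split $H(\rvye^n|\rvs^n) = H(\rvye^n|\rvs^n,\rvu^n) + I(\rvu^n;\rvye^n|\rvs^n)$. The first term is handled exactly as in the secrecy analysis of \thrmref{thm:lbNoDisc}: the induced channel $(\rvu^n,\rvs^n)\to\rvye^n$ is memoryless, so $H(\rvye^n|\rvs^n,\rvu^n) = \sum_{t=1}^n H(\rvy_{e,t}|\rvs_t,\rvu_t)$, and since $(\rvu^n,\rvs^n)$ is jointly typical with high probability this normalizes to $H(\rvye|\rvs,\rvu)$. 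Because $H(\rvye|\rvs,\rvu) + I(\rvu;\rvye|\rvs) = H(\rvye|\rvs)$, everything now hinges on establishing $I(\rvu^n;\rvye^n|\rvs^n) \ge nI(\rvu;\rvye|\rvs) - no_n(1)$.

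This mutual-information lower bound is the crux, and it is precisely where the hypothesis $I(\rvu;\rvyr|\rvs) > I(\rvu;\rvye|\rvs)$ enters. I would expand $I(\rvu^n;\rvye^n|\rvs^n) = H(\rvu^n|\rvs^n) - H(\rvu^n|\rvs^n,\rvye^n)$. Conditioned on $\rvs^n$, the codeword $\rvt_i^{n_i}$ is drawn uniformly from the roughly $2^{n_i I(\rvu;\rvyr|\rvs=s_i)}$ codewords of $\cC_i$ and the classes are independent, so $H(\rvu^n|\rvs^n) = nI(\rvu;\rvyr|\rvs) - no_n(1)$. For the subtracted term I condition on the wiretap messages $\rvk_1,\ldots,\rvk_M$, which are deterministic (bin-index) functions of $\rvu^n$, giving $H(\rvu^n|\rvs^n,\rvye^n) = H(\rvk_1,\ldots,\rvk_M|\rvs^n,\rvye^n) + H(\rvu^n|\rvs^n,\rvye^n,\rvk_1,\ldots,\rvk_M)$. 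The first summand is at most $H(\rvk_1,\ldots,\rvk_M) = n[I(\rvu;\rvyr|\rvs)-I(\rvu;\rvye|\rvs)] - no_n(1)$, the total message rate. The second summand is $no_n(1)$ by Fano's inequality: once the bin index is revealed, each bin of $\cC_i$ retains only $2^{n_i(I(\rvu;\rvye|\rvs=s_i)-\eps_n)}$ codewords, so an eavesdropper that is handed $\rvs^n$ can reliably identify the transmitted $\rvt_i^{n_i}$ from $\rvy_{ei}^{n_i}$ by the Packing Lemma~\cite[Chapter 3]{ElGamalKim} over the memoryless channel $p_{\rvye|\rvu,\rvs=s_i}$. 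Subtracting gives $H(\rvu^n|\rvs^n,\rvye^n) \le n[I(\rvu;\rvyr|\rvs)-I(\rvu;\rvye|\rvs)] + no_n(1)$, hence $I(\rvu^n;\rvye^n|\rvs^n) \ge nI(\rvu;\rvye|\rvs) - no_n(1)$, which closes the chain.

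I expect the main obstacle to be exactly this lower bound on $I(\rvu^n;\rvye^n|\rvs^n)$: a naive single-letterization produces only the \emph{upper} bound $I(\rvu^n;\rvye^n|\rvs^n) \le \sum_{t} I(\rvu_t;\rvy_{e,t}|\rvs_t)$, which points the wrong way. The decodability of a codeword within its own bin --- legitimate only because each bin has rate $I(\rvu;\rvye|\rvs=s_i)$ strictly below the codebook rate $I(\rvu;\rvyr|\rvs=s_i)$ --- is what upgrades the crude message-rate estimate of $H(\rvu^n|\rvs^n,\rvye^n)$ into the entropy lower bound we need. The remaining bookkeeping, namely that the sums $\sum_i n_i(\cdot)$ with $n_i = n(p_i-\eps_n)$ converge to the single-letter averages $n\sum_i p_i(\cdot)$, is routine given the strong typicality of $\rvs^n$.
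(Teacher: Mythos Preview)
Your proposal is correct and follows essentially the same decomposition as the paper: expand $H(\rvs^n|\rvye^n)$ via $H(\rvye^n|\rvs^n,\rvu^n)+I(\rvu^n;\rvye^n|\rvs^n)+H(\rvs^n)-H(\rvye^n)$, handle the first term by memorylessness of $(\rvu^n,\rvs^n)\to\rvye^n$, the last two by subadditivity and the i.i.d.\ property, and reduce everything to the lower bound $I(\rvu^n;\rvye^n|\rvs^n)\ge nI(\rvu;\rvye|\rvs)-no_n(1)$, which is exactly where the hypothesis $I(\rvu;\rvyr|\rvs)>I(\rvu;\rvye|\rvs)$ is used. The only cosmetic difference is that the paper obtains the bound $H(\rvu^n|\rvs^n,\rvye^n)\le n[I(\rvu;\rvyr|\rvs)-I(\rvu;\rvye|\rvs)]+no_n(1)$ by invoking a list-size lemma from~\cite{khiangElGamal:09}, whereas you derive it explicitly by conditioning on the bin indices $\rvk_1,\ldots,\rvk_M$ and applying the Packing Lemma plus Fano within each bin; both arguments are standard and equivalent here.
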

\begin{proof}
First observe that we can write:
\begin{align}
&\frac{1}{n}H(\rvs^n|\rvye^n)= \frac{1}{n}H(\rvye^n|\rvs^n) + \frac{1}{n}H(\rvs^n)- \frac{1}{n}H(\rvye^n)\\
&=\frac{1}{n}H(\rvye^n|\rvs^n,\rvu^n) + \frac{1}{n}I(\rvu^n;\rvye^n|\rvs^n) + \frac{1}{n}H(\rvs^n)-\frac{1}{n}H(\rvye^n).\label{eq:expanded}
\end{align}
We now observe the following. Since the channel from $(\rvu^n,\rvs^n)\rightarrow \rvye^n$ is memoryless, 
\begin{equation}
\frac{1}{n}H(\rvye^n|\rvs^n,\rvu^n) = \frac{1}{n}\sum_{i=1}^n H(\rvy_{ei}|\rvs_i,\rvu_i) \rightarrow H(\rvye|\rvs,\rvu)\label{eq:t1}
\end{equation}as $n\rightarrow \infty$. Next note that by construction
\begin{align}
\frac{1}{n}H(\rvu^n|\rvs^n) = I(\rvu;\rvyr|\rvs)-2\eps_n,
\end{align}
and since  $I(\rvu;\rvyr|\rvs)> I(\rvu;\rvye|\rvs)$ it follows  (c.f.~\cite[Lemma~1]{khiangElGamal:09}) that\footnote{Intuitively for any typical $\rvs^n$, the total number of sequences $\rvu^n$  is $2^{n I(\rvu;\rvyr|\rvs)}$. The probability that a sequence $\rvu^n$ is jointly typical with $\rvye^n$ is $2^{-nI(\rvu;\rvye|\rvs)}$. A precise argument involves bounding the expected size of the list and invoking a concentration result. }
\begin{align}
\frac{1}{n}H(\rvu^n|\rvs^n,\rvye^n) \le I(\rvu;\rvyr|\rvs) - I(\rvu;\rvye|\rvs) - o_n(1)
\end{align}
Combining the above two inequalities,
\begin{align}
\frac{1}{n}I(\rvu^n;\rvye^n|\rvs^n) \ge I(\rvu;\rvye|\rvs) - o_n(1)\label{eq:t2}
\end{align}
Since the sequence $\rvs^n$ is sample i.i.d.\ we have
\begin{align}
\frac{1}{n}H(\rvs^n) = H(\rvs) \label{eq:t3}
\end{align} and finally from the chain rule
\begin{align}
\frac{1}{n}H(\rvye^n) &\le \frac{1}{n}H(\rvy_{ei}) \rightarrow H(\rvye)\label{eq:t4}
\end{align} as $n\rightarrow \infty$. Substituting~\eqref{eq:t1},~\eqref{eq:t2},~\eqref{eq:t3} and~\eqref{eq:t4} into~\eqref{eq:expanded} completes the claim.
\end{proof}
The secrecy analysis can be completed by combining~\eqref{eq:ChContr} and~\eqref{eq:SrcContr} as shown below.
\begin{align}
&~\frac{1}{n}H(\rvk_1^M, \rvk_s|\rvye^n) = \frac{1}{n}H(\rvk_1^M |\rvk_s, \rvye^n) + \frac{1}{n}H(\rvk_s|\rvye^n)\\
&\ge \frac{1}{n}H(\rvk_1^M |\rvs^n, \rvye^n) + \frac{1}{n}H(\rvk_s|\rvye^n) \label{eq:condEnt}\\
&\ge I(\rvu;\rvyr|\rvs)-I(\rvu;\rvye|\rvs) + \frac{1}{n}H(\rvk_s|\rvye^n) - o_n(1) \label{eq:chContr2}\\
&\!\ge\! I(\rvu;\rvyr|\rvs)\!-\!I(\rvu;\rvye|\rvs) \!+ \!\frac{1}{n}H(\rvs^n|\rvye^n)-\frac{1}{n}H(\rvs^n|\rvye^n,\rvk_s)\!-\!o_n(1) \label{eq:condEnt2}\\
&\ge I(\rvu;\rvyr|\rvs)-I(\rvu;\rvye|\rvs) + H(\rvs|\rvye)-\frac{1}{n}H(\rvs^n|\rvye^n,\rvk_s)-o_n(1) \label{eq:SrcContr2}\\
&= I(\rvu;\rvyr|\rvs)-I(\rvu;\rvye|\rvs) + H(\rvs|\rvye)-o_n(1) \label{eq:FanoDecoding}
\end{align}where~\eqref{eq:condEnt} and~\eqref{eq:condEnt2} follow from the fact that $\rvk_s$ is a deterministic function of $\rvs^n$ while~\eqref{eq:chContr2} follows by substituting~\eqref{eq:ChContr} and~\eqref{eq:SrcContr2} follows by substituting~\eqref{eq:SrcContr} while~\eqref{eq:FanoDecoding} follows from the fact that $\frac{1}{n}H(\rvs^n|\rvye^n,\rvk_s) \rightarrow 0$ as $n\rightarrow\infty$, since from the construction of $\cC_s$ there are at-most $2^{n (I(\rvs;\rvye)-\eps_n)}$ sequences associated with any given bin. Hence  the decoder can decode $\rvs^n$ with high probability and hence Fano's inequality applies.

\subsection{Converse}
\label{subsec:converse}
For any sequence of codes indexed by the codeword length $n$, we show that the secret key rate is upper bounded by the
capacity expression~\eqref{eq:symCap} plus a term that vanishes to zero as
the block length goes to zero. By applying the Fano inequality on the secret-key rate, we have that for some sequence $\eps_n$ that
approaches zero as $n$ goes to infinity that
\begin{align}
nR \le I(\rvk;\rvl) + n\eps_n \le I(\rvk;\rvs^n,\rvyr^n) + n\eps_n
\end{align}
where the last step follows from the data processing inequality
since $\rvl = h_n(\rvs^n, \rvyr^n)$. Furthermore from the secrecy condition $I(\rvk;\rvye^n) \le n\eps_n$ and hence,
\begin{align}
&nR \le I(\rvk;\rvs^n,\rvyr^n) - I(\rvk;\rvye^n) + 2n\eps_n\\
&\le \sum_{i=1}^n I(\rvk;\rvy_{ri},\rvs_i|\rvye^{i-1}\rvy_{r,i+1}^n,\rvs_{i+1}^n) - I(\rvk;\rvy_{e,i}|\rvye^{i-1}\rvy_{r,i+1}^n,\rvs_{i+1}^n),
\end{align}where the second step follows from the Csiszar sum-identity~\cite[Chapter 2]{ElGamalKim}
applied to difference of mutual informations. The derivation is analogous to~\cite{csiszarKorner:78} and is omitted. If we let $\rvv_i = (\rvye^{i-1}\rvy_{r,i+1}^n,\rvs_{i+1}^n)$ and $\rvu_i = (\rvk,\rvv_i)$ note that $\rvv_i \rightarrow \rvu_i \rightarrow (\rvx_i,\rvs_i) \rightarrow (\rvy_{r,i},\rvy_{e,i})$ holds.  Maximizing over each term in the summation we obtain that
\begin{align}
R &\le \max_{p_{\rvu,\rvv,\rvx}} I(\rvu;\rvyr,\rvs|\rvv)- I(\rvu;\rvye|\rvv) + 2\eps_n \label{eq:st1}\\
&= \max_{p_{\rvu,\rvx}} I(\rvu;\rvyr,\rvs)- I(\rvu;\rvye) + 2\eps_n \label{eq:st2}
\end{align}
where the second step follows from the fact that the maximizing over $\rvv$ is redundant since~\eqref{eq:st1} involves a convex combination  of $I(\rvu;\rvyr,\rvs|\rvv=v_i) - I(\rvu;\rvye|\rvv=v_i)$ and hence we can replace with the term that results in the largest value.  We recover~\eqref{eq:symCap} from~\eqref{eq:st2} by using an approach similar to~\eqref{eq:ft2}.

\section{Conclusions}

We study the secret key agreement capacity over a wiretap channel controlled by a state parameter. Lower and upper bounds on the capacity are established when the state sequence is known noncausally to the encoder. The lower bound is obtained by creating a common reconstruction sequence at the legitimate terminals and binning the set of reconstruction sequences to generate a secret key. When evaluated for the Gaussian case (secret-key from dirty paper) our bounds coincide in the high SNR and high INR regimes and the gap between the two bounds is always less than 0.5 bits.  We also observe that the rates for secret-key agreement can be significantly higher than that proposed for the secret message transmission problem. We also extend our earlier~\cite{khisti:09} results on symmetric CSI to the general case of asymmetric CSI. 

A complete characterization of the secret-key capacity is obtained for the case of symmetric channel state information i.e., when the state sequence is known to both the encoder and the decoder. In this case we also present another coding scheme that involves multiplexed wiretap codebooks and only requires causal knowledge of the state sequence at the encoder. The capacity expression also captures an interesting tradeoff between correlating the input with the state sequence to maximize the contribution of the wiretap codebook and masking the state sequence from the eavesdropper. We illustrate this with a numerical example.

In terms of future work it will be interesting to study the secret key agreement capacity when there is only causal state information available to the transmitter. While this paper establishes the capacity when there is symmetric CSI at both the legitimate terminals, the more general problem of two-sided CSI remains to be explored. In another direction, secret key agreement protocols also appear to be an important component in more general problems. For example in~\cite{prabhakaran:09} the authors independently developed a secret-key agreement scheme as a building block in characterizing a secret message and secret key tradeoff for wiretap channels with correlated sources.   Another recent work~\cite{khiangElGamal:10} studies the problem of secret message transmission on wiretap channel with symmetric CSI and uses a block Markov encoding scheme that generates a secret key in each block~\cite{khisti:09}. Exploring such connections is an interesting direction  for future research. 

\bibliographystyle{IEEEtran}
\bibliography{sm}
\end{document}